\def\Z{\mathbb{Z}}
\def\F{\mathbb{F}}
\def\A{\mathcal{A}}
\def\B{\mathcal{B}}
\def\P{\mathcal{P}}
\def\AUX{\mathrm{AUX}}
\def\MAP{\mathrm{MAP}}
\def\TR{\mathrm{TR}}
\DeclareMathOperator{\poly}{poly}
\newcommand{\problem}[1]{\textsf{#1}}
\newcommand{\ThreeSUM}{\problem{3SUM}\xspace}
\newcommand{\ThreeSUMInd}{\problem{3SUM-Indexing}\xspace}
\newcommand{\kSUM}{\problem{kSUM}\xspace}
\newcommand{\kSUMInd}{\problem{kSUM-Indexing}\xspace}
\newcommand{\kXORInd}{\problem{kXOR-Indexing}\xspace}
\newcommand{\GSI}{\problem{Gapped String Indexing}\xspace}
\newcommand{\JI}{\problem{Jumbled Indexing}\xspace}
\let\Im\relax
\DeclareMathOperator{\Im}{Im}
\newcommand{\eps}{\varepsilon}
\renewcommand{\epsilon}{\varepsilon}
\theoremstyle{plain}
\newtheorem{theorem}{Theorem}[section]
\newtheorem{lemma}[theorem]{Lemma}
\newtheorem{conjecture}[theorem]{Conjecture}
\newtheorem{remark}[theorem]{Remark}
\theoremstyle{definition}
\newtheorem{definition}[theorem]{Definition}
\title{Improved Time-Space Tradeoffs for \ThreeSUMInd}
\author{
    Itai Dinur
    \thanks{Ben-Gurion University and Georgetown University. Email: \texttt{dinuri@bgu.ac.il}.}
    \and
	Alexander Golovnev%
	\thanks{Georgetown University. Email: \texttt{alexgolovnev@gmail.com}. Supported by the National Science Foundation CAREER award (grant CCF-2338730).}
}
\date{}
\begin{document}
\maketitle
\thispagestyle{empty}
\begin{abstract}
\ThreeSUMInd is a preprocessing variant of the \ThreeSUM problem that has recently received a lot of attention.
The best known time-space tradeoff for the problem is $T S^3 = n^{6}$ (up to logarithmic factors), where $n$ is the number of input integers, $S$ is the length of the preprocessed data structure, and $T$ is the running time of the query algorithm. This tradeoff was achieved 
in~\cite{KP19,GGHPV20} using the Fiat-Naor generic algorithm for Function Inversion. Consequently,~\cite{GGHPV20}
asked whether this algorithm can be improved by leveraging the structure of \ThreeSUMInd. 

In this paper, we exploit the structure of \ThreeSUMInd to give a time-space tradeoff of $T S = n^{2.5}$, which is better than the best known one in the range $n^{3/2} \ll S \ll n^{7/4}$. We~further extend this improvement to the \kSUMInd problem---a generalization of \ThreeSUMInd---and to the related \kXORInd problem, where addition is replaced with XOR.
Additionally, we improve the best known time-space tradeoffs for the \JI problem, which is a well-known data structure problem related to \ThreeSUMInd.

Our improvement comes from an alternative way to apply the Fiat-Naor algorithm to \ThreeSUMInd. Specifically, we exploit the structure of the function to be inverted by decomposing it into ``sub-functions'' with certain properties. This allows us to apply an improvement to the Fiat-Naor algorithm (which is not directly applicable to \ThreeSUMInd), obtained 
in~\cite{GGPS23}
in a much larger range of parameters. We believe that our techniques may be useful in additional application-dependent optimizations of the Fiat-Naor algorithm.  
\end{abstract}
\newpage
\setcounter{page}{1}.

\section{Introduction}
In the \ThreeSUM problem, the input is a set of integers $A = \{a_0,\ldots,a_{n-1} \}$, and the goal is to find $(i,j,k) \in [n]^3$  
(where $[n] = \{0,\ldots,n-1\}$) such that $a_i + a_j + a_k = 0$.
The modern version of the famous \ThreeSUM conjecture~\cite{GO95} asserts that the problem cannot be solved in time $n^{2 - \delta}$ for any constant $\delta > 0$. 
This conjecture was shown to imply many conditional lower bounds for geometric, combinatorial, and string search problems (see, e.g.,~\cite{V18} for an excellent survey on this topic).

\subsection{\ThreeSUMInd}
In this paper, we analyze a preprocessing variant of \ThreeSUM,
known as \ThreeSUMInd. This variant was first defined by Demaine and Vadhan~\cite{DV01} in an unpublished note, and then reconsidered by Goldstein,
Kopelowitz, Lewenstein and Porat~\cite{GKLP17}.

In the \ThreeSUMInd problem, the input is an array of positive integers $A = \{a_0,\ldots,a_{n-1}\}$. 
An algorithm $\A$ for the problem is a pair $\A = (\A_0,\A_1)$, where $\A_0$ is the preprocessing algorithm and~$\A_1$ is the online algorithm. 
The preprocessing algorithm $\A_0$ receives $A$ as input and outputs a data structure of length $S$ bits.\footnote{One can also define the problem by allowing $\A_0$ to output $S$ words of size poly-logarithmic in $n$. Since we ignore such poly-logarithmic factors in this paper, we consider these definitions equivalent.} 
The online algorithm $\A_1$ receives as input an integer challenge $y$ 
and has access to the bits of the data structure output by $\A_0$. Its goal is to output a pair $(i,j) \in [n]^2$ such that $a_i + a_j = y$, if such a pair exists, and $\bot$ otherwise. 
In the cell-probe model, 
the time complexity of $\A_1$ (denoted by $T$) is only measured by the number of data structure bits that it queries. One can also consider other computational models which account for the actual runtime of $\A_1$ (e.g., in the RAM model) or the preprocessing time. 

There are two trivial algorithms for \ThreeSUMInd: the first one stores $A$ sorted, while on input $y$, $\A_1$ searches this array for $y - a_i$ for each $i \in [n]$. This algorithm has 
$S = \widetilde{O}(n)$ and 
$T = \widetilde{O}(n)$
(where the $\widetilde{O}$ notation
suppresses factors poly-logarithmic in the input length). 
The second algorithm stores the sorted sumset $A + A = \{ a_i + a_j \mid (i,j) \in [n]^2 \}$, while on input $y$, $\A_1$ searches this sumset. The second algorithm has 
$S = \widetilde{O}(n^2)$ and 
$T = \widetilde{O}(1)$.

Consequently,~\cite{DV01,GKLP17} formulated three conjectures:
\begin{conjecture}[\cite{GKLP17}]
\label{conj:1}
If there exists an algorithm which solves \ThreeSUMInd with prepro-
cessing space $S$ and $T = \widetilde{O}(1)$ probes then $S = \widetilde{\Omega}(n^2)$.
\end{conjecture}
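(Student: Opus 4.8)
Since \Cref{conj:1} is one of the central open problems motivating this line of work, I outline the approach I would take to attack it rather than a complete proof. The natural route is a cell-probe space lower bound via an incompressibility (encoding) argument, in the style of Panigrahy--Talwar--Wieder and Larsen: fix a hard distribution over inputs $A$, fix an adaptive query algorithm $\A_1$ making $t = \widetilde{O}(1)$ probes, and show that any data structure $D$ of $S$ bits that is correct on this distribution can be used, together with a short amount of side information, to reconstruct so much about $A$ that $S = \widetilde{\Omega}(n^2)$ is forced.

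The first step is to choose the hard distribution so that the query answers carry real information: draw $A$ as a random subset of a sufficiently large interval (suitably thinned toward a Sidon-like set), so that with high probability (i)~the sumset $A+A$ has $\Theta(n^2)$ distinct values, and (ii)~a constant fraction of pairs $(i,j)$ satisfy that $a_i + a_j$ has a \emph{unique} representation, so the correct answer on the challenge $y = a_i + a_j$ is forced to be $(i,j)$. The second step is a cell-sampling argument: sample a $\sigma$-fraction of the $S$ cells; a query is ``resolved'' if all of its $\le t$ probes fall inside the sample, which happens for roughly a $\sigma^{t}$ fraction of the $\Theta(n^2)$ unique-witness challenges. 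The third step is to encode the resolved challenges together with their forced witnesses using only the contents and addresses of the sampled cells, i.e.\ in $\sigma S \cdot \widetilde{O}(1)$ bits; since a forced witness $(i,j)$ also pins down the relevant entries of $A$, a careful choice of $\sigma$ should convert ``few bits written'' into ``a lot of information about $A$ recovered,'' and rearranging should then bound $S$ from below.

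The hard part---and the reason the statement is only conjectured---is that every query answer is a function of $A$, and $A$ itself carries only $\widetilde{O}(n)$ bits of entropy; hence any purely information-theoretic encoding argument of the above type cannot extract more than $\widetilde{\Omega}(n)$ genuinely independent bits, and so cannot beat $S = \widetilde{\Omega}(n)$. To reach $\widetilde{\Omega}(n^2)$ one must argue not that $D$ \emph{stores} $A$, but that $D$ cannot \emph{compute} the sumset-membership-with-witness map within $t$ probes unless it is large---that is, a genuine size lower bound on a ``$t$-probe branching program'' for this specific structured function. An unconditional polynomial space lower bound for a constant-probe static data-structure problem would itself be a major breakthrough, so I would instead aim for a conditional result: reduce a well-studied hard preprocessing problem (such as set intersection, or Boolean matrix-vector multiplication with preprocessing) to \ThreeSUMInd, or relate $t = \widetilde{O}(1)$-probe schemes to nontrivial function-inversion data structures and invoke the Fiat--Naor / Corrigan-Gibbs--Kogan bounds in a parameter regime where they are tight. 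A reasonable first milestone would be the non-adaptive constant-probe special case, where the probe pattern depends only on $y$ and the encoding bookkeeping is cleaner; carrying out such a reduction in the general case while losing only polylogarithmic factors and preserving the $T = \widetilde{O}(1)$ query budget is where I expect the real difficulty to lie.
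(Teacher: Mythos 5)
The statement you were asked about is a \emph{conjecture} (attributed to \cite{GKLP17}), not a result the paper proves, and you correctly recognized this: there is no ``paper's own proof'' to compare against. The paper cites the conjecture, discusses its status, and presents algorithms that refute the \emph{stronger} Conjecture~\ref{conj:3} but leave Conjecture~\ref{conj:1} open (the $T = \widetilde{O}(1)$ regime is untouched by the $TS = \widetilde{O}(n^{2.5})$ tradeoff, which still gives $S = \widetilde{\Omega}(n^2)$ when $T$ is polylogarithmic).

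Your assessment of why the conjecture is hard is sound and consistent with what is known. The entropy obstruction you identify is the right one: after the standard reduction the input $A$ carries only $\widetilde{O}(n)$ bits of information, so a cell-sampling/incompressibility argument that charges recovered query answers against the entropy of $A$ cannot beat $S = \widetilde{\Omega}(n)$. This is precisely why the known cell-probe lower bounds top out at $S = \widetilde{\Omega}(n^{1+1/T})$ — the non-adaptive bound of \cite{GGHPV20} and the adaptive bound of \cite{CL23}, both of which the paper cites and both of which degrade to roughly $\widetilde{\Omega}(n)$ for $T = \widetilde{O}(1)$. Your observation that an unconditional polynomial-factor space lower bound for a constant-probe static data structure would be a major breakthrough is exactly the barrier the paper itself mentions (``proving super-logarithmic query-time bounds for static data structures even with space $S = O(n)$ is a major open problem''). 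One small addition worth noting: the paper records that Demaine and Vadhan already proved the $T = 1$ special case of this conjecture, so the ``non-adaptive constant-probe'' milestone you propose is partially realized, but extending from one probe to $\widetilde{O}(1)$ probes is precisely where the difficulty lies, for the reasons you describe. In short, your proposal is an honest and accurate situation report on an open problem, not a gap.
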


\begin{conjecture}[\cite{DV01}]
\label{conj:2}
If there exists an algorithm which solves \ThreeSUMInd with preprocessing
space $S$ and $T$ probes, then $ST = \widetilde{\Omega}(n^2)$.
\end{conjecture}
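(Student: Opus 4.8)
Since Conjecture~\ref{conj:2}, due to Demaine and Vadhan~\cite{DV01}, is still open --- and, as one checks immediately, is not contradicted by the $TS=n^{2.5}$ tradeoff obtained here --- any ``proof proposal'' is necessarily speculative; what follows is the natural line of attack together with the point at which it stalls. The plan is to establish the bound as a cell-probe lower bound via an information-theoretic encoding argument. Fix a hard input distribution: let $A$ be a uniformly random size-$n$ subset of $\{1,\dots,n^{c}\}$ for a suitable constant $c$, so that with high probability $A$ is a Sidon set (all pairwise sums $a_i+a_j$ with $i\le j$ are distinct) and $H(A)=\Theta(n\log n)$. Given $\A=(\A_0,\A_1)$ correct on this distribution with parameters $S$ and $T$, the goal is to encode $A$ using roughly ``$S$ restricted to the cells that queries actually touch'' plus $O(\log n)$ side bits, and to contradict the entropy bound unless $ST=\widetilde{\Omega}(n^2)$.

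The feature that makes the encoding possible is that $\A_1$ is a \emph{fixed deterministic} procedure, so a decoder holding a subset of the cells of $D=\A_0(A)$ can replay $\A_1$ on any challenge $y$ and obtain the correct output whenever all $T$ probes of that run land inside the held subset. Apply cell sampling: a uniformly random set $C$ of $\sigma$ cells of $D$ resolves, in expectation, about $N\cdot(\sigma/S)^{T}$ of the $N=\Theta(n^2)$ challenges in $A+A$, so taking $\sigma\approx S\cdot n^{-1/T}$ already resolves $\widetilde{\Theta}(n)$ challenges, and (by a union bound, possibly taking $\sigma$ slightly larger) one can choose $C$ so that the index pairs of the resolved challenges form a connected graph on $[n]$. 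For such a $C$, publishing the identity and contents of $C$ (about $\sigma\log(S/\sigma)+\sigma$ bits) together with $O(\log n)$ bits lets the decoder recover, via the Sidon property, the value $a_i+a_j$ for a spanning set of index pairs, hence $A$ up to a global translation. Comparing with $H(A)$ yields $\sigma\gtrsim n\log n/\log S$, i.e.\ a lower bound of the shape $S=\widetilde{\Omega}(n^{1+1/T})$.

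The main obstacle --- and the reason this remains a conjecture --- is exactly that $S=\widetilde{\Omega}(n^{1+1/T})$ is far from $ST=\widetilde{\Omega}(n^2)$: at the interesting operating points $S=n^{1+\delta}$ with constant $\delta>0$ it gives only $T=\Omega(1/\delta)=\Omega(1)$, and in product form it reads $ST=\widetilde{\Omega}(n^{1+1/T}\cdot T)$, which is $\ll n^2$ as soon as $T=n^{o(1)}$ (it does recover the conjectured bound only for $T$ essentially $1$). The usual refinements --- iterating the sampling step while peeling off published cells, round elimination, or replacing the worst-case counting by a smarter potential --- do not cross this barrier, and any proof of $ST=\widetilde{\Omega}(n^2)$ would in particular give a strong (super-logarithmic, and for small $S$ nearly linear) query-time lower bound for a static problem with polynomial space, a range in which such unconditional lower bounds are notoriously hard to obtain. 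The realistic routes therefore seem to be either a genuinely new lower-bound technique or a reduction to \ThreeSUMInd from a data-structure problem whose time--space tradeoff is independently understood --- and at present neither is available, which is why the statement is a conjecture rather than a theorem.
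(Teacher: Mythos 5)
This is a conjecture, not a theorem --- the paper states it without proof, and you correctly recognize that it remains open. Your sketch of the cell-sampling encoding argument, yielding only $S = \widetilde{\Omega}(n^{1+1/T})$, and your explanation of why this falls short of $ST = \widetilde{\Omega}(n^2)$ are exactly in line with the paper's own discussion of the known lower bounds (\cite{GGHPV20} for non-adaptive, \cite{CL23} for adaptive) and its remark that super-logarithmic query-time lower bounds for static data structures would require a major breakthrough; there is no proof in the paper to compare against, and your assessment is appropriate.
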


\begin{conjecture}[\cite{GKLP17}]
\label{conj:3}
If there exists an algorithm which solves \ThreeSUMInd with $T =
\widetilde{O}(n^{1 - \delta})$ probes for some $\delta > 0$ then 
$S = \widetilde{\Omega}(n^2)$.
\end{conjecture}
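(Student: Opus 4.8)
The plan is to prove \cref{conj:3} as an \emph{unconditional} cell-probe lower bound: show that any data structure for \ThreeSUMInd with cells of $w=\poly\log n$ bits that answers each challenge in $T=\widetilde{O}(n^{1-\delta})$ probes must use $S=\widetilde{\Omega}(n^2)$ bits. The natural tool is a cell-sampling / encoding argument. First I would fix a hard input distribution $\mathcal{D}$ on arrays $A$ under which the sumset $A+A$ realizes $\Theta(n^2)$ distinct values, so that there are $Q=\Theta(n^2)$ ``meaningful'' challenges. Then, for a parameter $\tau$, a uniformly random subset $C$ of $\tau$ of the $m=S/w$ cells ``resolves'' an expected $\approx Q\cdot(\tau/m)^T$ of these challenges --- meaning every probe made on such a challenge lands inside $C$ --- so from $C$ alone (its addresses and contents) a decoder can recompute the witnesses of all resolved challenges. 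The goal is to choose $\tau$ and $\mathcal{D}$ so that the resolved witnesses jointly carry more information than the $\tau(w+\log m)$ bits describing $C$, contradicting incompressibility unless $m=\widetilde{\Omega}(n^2)$. A clean warm-up is \cref{conj:1} ($T=\widetilde{O}(1)$): then $\tau$ may be a constant fraction of $m$ while still resolving a constant fraction of all $\Theta(n^2)$ challenges, so the whole problem collapses to the information-theoretic question of whether an $o(n^2)$-bit string can reproduce correct witnesses for a constant fraction of all $\Theta(n^2)$ pairwise sums of $A$.

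The obstacle --- and the reason all three statements above are still conjectures --- is the \emph{entropy barrier}. The array $A$ carries only $\widetilde{O}(n)$ bits of information, and the induced witness map $y\mapsto(i,j)$ is a deterministic function of $A$, so its entire truth table compresses to $\widetilde{O}(n)$ bits; hence any ``encode $A$''-style argument, however clever, certifies at best $S=\widetilde{\Omega}(n)$, nowhere near $n^2$. The standard way to escape such a barrier --- take a direct sum of many independent instances --- does not help here, because entropy is additive: $m'$ independent copies $A^{(1)},\dots,A^{(m')}$ have total entropy only $\widetilde{O}(m'n)$, i.e.\ still $\widetilde{O}(n)$ per copy, so the per-instance bound does not improve. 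To cross into the quadratic regime one would need a genuinely non-additive amplification: a gadget or a family of instances whose witnesses are \emph{mutually incompressible} at scale $\Theta(n^2)$ even though the instance description is small --- and I do not see how to build one. (Black-box Function-Inversion lower bounds, e.g.\ Hellman-type bounds matching the Fiat--Naor tradeoff used on the upper-bound side, also give nothing, precisely because the preprocessor reads all of $A$ rather than querying it as an oracle.)

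Consequently, the honest plan is to retreat to what is within reach and treat the full statement as a research program. The tractable targets are: (i) lower bounds in restricted data-structure models, e.g.\ ``linear'' or systematic structures, in the spirit of~\cite{GGHPV20}; (ii) conditional versions, basing \cref{conj:3} on a function-inversion hardness assumption or on the \ThreeSUM conjecture itself; and (iii) sharpening the polylogarithmic factors in the known unconditional bounds via a more refined per-query incompressibility statement fed through cell sampling. But an unconditional proof of \cref{conj:3} --- or even of \cref{conj:1} --- would itself be a breakthrough in static data-structure lower bounds: for word size $\poly\log n$ and $\widetilde{O}(1)$ probes, no explicit problem is known to require $n^{1+\Omega(1)}$ space, and improving past today's frontier for linear data-structure problems is known to imply new matrix-rigidity lower bounds. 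So I expect the main obstacle to be exactly the amplification step --- isolating the right $\Theta(n^2)$-scale incompressible object --- after which the cell-sampling machinery should close the argument, interpolating from $T=\widetilde{O}(1)$ up to $T=\widetilde{O}(n^{1-\delta})$ by a routine probe-reduction step.
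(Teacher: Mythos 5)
The statement you are trying to prove is not a theorem of the paper --- it is Conjecture~\ref{conj:3}, which the paper explicitly states has been \emph{refuted}. Quoting the introduction: ``From the algorithmic side, \cref{conj:3} was refuted by Kopelowitz and Porat~\cite{KP19} and by~\cite{GGHPV20} using the same techniques.'' The refutation is algorithmic, not a lower bound: applying Fiat--Naor to $f(i,j)=a_i+a_j$ gives an $(S,T)$-algorithm with $T=\widetilde{O}(n^\delta)$ and $S=\widetilde{O}(n^{2-\delta/3})$ for any $0<\delta<1$, which simultaneously has $T=\widetilde{O}(n^{1-(1-\delta)})$ (sublinear probes) and $S=o(n^2)$, directly contradicting the conjecture's conclusion. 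The present paper's \cref{thm:threeSum} strengthens this refutation in a parameter range, achieving $S=\widetilde{O}(n^{2.5-\delta})$ and $T=\widetilde{O}(n^\delta)$, which for $\delta>1/2$ again gives sublinear probes with subquadratic space.

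Your proposal correctly senses that a cell-sampling argument hits an entropy barrier around $\widetilde{\Omega}(n)$ and that crossing into the $n^2$ regime unconditionally would be a breakthrough, and you honestly retreat to calling the statement ``a research program.'' That diagnosis is accurate as far as it goes --- but it stops one step short of the actual situation. The reason no such amplification gadget exists is not merely that we lack the technique: the conjecture is \emph{false}, so no correct proof can exist, conditional or otherwise. The cell-sampling and incompressibility machinery you sketch is the right toolkit for the true (but much weaker) lower bounds that are known --- the 1-probe $\widetilde{\Omega}(n^2)$ bound of Demaine--Vadhan and the $S=\widetilde{\Omega}(n^{1+1/T})$ bounds of \cite{GGHPV20,CL23} --- but pushing it to $T=\widetilde{O}(n^{1-\delta})$ with $S=\widetilde{\Omega}(n^2)$ would contradict an explicit algorithm. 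The expected response to this statement, given the paper, is to exhibit the Fiat--Naor-based data structure (or the one in \cref{thm:threeSum}) as a counterexample, not to search for a proof.
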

These conjectures are in ascending order of strength:
\begin{align*}
\text{Conjecture~\ref{conj:3}} \Rightarrow  
\text{Conjecture~\ref{conj:2}}
\Rightarrow  
\text{Conjecture~\ref{conj:1}}.
\end{align*}

In terms of lower bounds,
Demaine and Vadhan proved that any 1-probe data structure
for \ThreeSUMInd requires space $\widetilde{\Omega}(n^2)$, and left the case of $T > 1$ open.
Then~\cite{GGHPV20} 
proved that for every non-adaptive algorithm that uses space $S$ and query time $T$ and solves \ThreeSUMInd, it holds that $S = \widetilde{\Omega}(n^{1+1/T})$.
More recently, Chung and Larsen~\cite{CL23} proved similar bounds for adaptive algorithms.
All these lower bounds assume the data structure consists of $S$ words of size $\widetilde{O}(1)$.
Since proving super-logarithmic query-time bounds for static data structures even with space
$S = O(n)$ is a major
open problem, the lower bounds of~\cite{CL23} are essentially the best possible (barring a significant breakthrough). 

From the algorithmic side, \cref{conj:3} was refuted by Kopelowitz and Porat~\cite{KP19} and by~\cite{GGHPV20}
using the same techniques. Specifically,~\cite{KP19,GGHPV20} describe an algorithm for \ThreeSUMInd with $T =
\widetilde{O}(n^{\delta})$ and
$S = \widetilde{\Omega}(n^{2 - \delta/3})$. 
The algorithm is based on the classical Fiat-Naor algorithm~\cite{FN91} for the Function Inversion problem,
where the goal is to invert an efficiently computable function $f: [N] \to [N]$ by a two-phase algorithm: a preprocessing algorithm for $f$ that outputs an advice string of length $S$ bits, and an online algorithm that receives a challenge $y \in [N]$ and finds $x \in [N]$ such that $f(x) = y$ after making $T = \tilde{O}(N^3/S^3)$ queries to~$f$. In the case of \ThreeSUMInd, by hashing, we may essentially assume that the input $a_0,\ldots,a_{n-1}$ satisfies $a_i \in [n^2]$ for every $i \in [n]$. We then define the function $f:[n^2] \to [n^2]$ by $f(i,j) = a_i + a_j$ and apply the Fiat-Naor algorithm to $f$. 

The Fiat-Naor based algorithm of~\cite{KP19,GGHPV20} is currently the best known algorithm for \ThreeSUMInd. 
Recently, Bille et al.~\cite{BGLPRS24} proved that  well-studied problems in data structures (specifically, \GSI and \JI) reduce to \ThreeSUMInd. 
Thus, the algorithm of~\cite{KP19,GGHPV20} is used to obtain the best known algorithms for these problems. The \ThreeSUMInd problem has also inspired more applications of the Fiat-Naor algorithm to related data structure problems~\cite{AESZ23,AHY25}. 
Given the extensive research involving \ThreeSUMInd and its variants, it is natural to ask whether it is possible to 
exploit the structure of the problem to improve the Fiat-Naor based algorithm of~\cite{KP19,GGHPV20}.
This question was posed explicitly in~\cite[Open Question 2]{GGHPV20} and \cite[Lecture 14]{VW24}.

We also mention the recent improvement of the Fiat-Naor algorithm for Function Inversion by~\cite{GGPS23}. 
Unfortunately, this improvement is in a restricted parameter range, and when applied to \ThreeSUMInd, the algorithm of~\cite{GGPS23} is outperformed by the trivial algorithm with $S = \widetilde{O}(n), \,T = \widetilde{O}(n)$.

\subsection{Application-Dependent Improvements of the Fiat-Naor Algorithm}
Looking at the open question of~\cite{GGHPV20} from a broader perspective, we recall that
the Fiat-Naor algorithm was originally designed to invert ``unstructured'' cryptographic one-way functions. 
The fact that it gives the best known time-space tradeoff for a broad spectrum of \emph{structured} problems ~\cite{CK19,KP19,GGHPV20,AESZ23,BGLPRS24,ACDI24,M24} is surprising. Quoting \cite{ACDI24}: 
\begin{quote}
It is a striking fact that, for each of these problems (\ThreeSUMInd, \problem{Collinearity Indexing}, \GSI), the best known space-time trade-offs
are achieved using such a general tool as the Fiat-Naor inversion scheme.    
\end{quote}

More generally, the Function Inversion problem has found major applications across multiple areas such as cryptanalysis~\cite{H80,BS00,BSW01,O03,NS05}, circuit and data structure lower bounds~\cite{Y90,CK19,DKKS21}, algorithms~\cite{KP19,GGHPV20,AESZ23,BGLPRS24,ACDI24,M24}, information theory~\cite{DKKS21}, and most recently even meta-complexity~\cite{MP24,HIW24}.
On the other hand, we are not aware of any previous application-dependent improvement of the Fiat-Naor scheme. Thus,
finding such improvements is an interesting research topic. Informally, the challenge is due to the unique data structure used by the Fiat-Naor algorithm, which is built by iterating the function we wish to invert. Such iterations seem to destroy most properties that can be exploited to optimize the algorithm (e.g., iterating a function represented by a low-degree polynomial quickly increases its degree).

\subsection{Our Results}
In this paper, we improve the best known time-space tradeoff for \ThreeSUMInd~\cite{KP19,GGHPV20} for the parameter range $n^{3/2} \ll S \ll n^{7/4}$.
In particular, we obtain the first application-dependent improvement of the Fiat-Naor scheme.

\begin{figure}[!ht]
  \centering
  \begin{tikzpicture}[samples=200,xscale=6.4,yscale=3.2]
  \draw[black,dotted,step=0.5] (2.3,0) grid (0.99,1.3);
  \foreach \x/\t in {1/$1$,{3/2}/{$\frac{3}{2}$},{5/3}/{$\frac{5}{3}$},{7/4}/{$\frac{7}{4}$},2/$2$} \node at (\x,-0.1) {\footnotesize \t};
  \foreach \y in {0,1 } \node at (0.95,\y) {\footnotesize $\y$};
  \node at (2.3,-0.2) {$\frac{\log{S}}{\log{n}}$};
  \node at (0.9,1.3) {$\frac{\log{T}}{\log{n}}$};
  \draw[->, >={Stealth[length=2mm]}, draw opacity=0] (1,0) -- (1,1.3);
  \draw[->, >={Stealth[length=2mm]}, draw opacity=0] (1,0) -- (2.3,0);
  \draw[green,line width=2,dotted] (1,1)--(2,1);
  \draw[red,line width=2] (1.5,1)--(1.75,0.75);
  \draw[blue,line width=2,dashed] (5/3,1)--(2,0);
  \draw[black,dotted](5/3,0)--(5/3,1);
  \draw[black,dotted](7/4,0)--(7/4,3/4);
  \end{tikzpicture}
  \caption{The parameters of the data structures for \ThreeSUMInd are as follows: the trivial algorithm is represented by the \textcolor{green}{dotted green curve}, the Fiat-Naor-based algorithm~\cite{KP19,GGHPV20} is represented by the \textcolor{blue}{dashed blue curve}, and our algorithm is represented by the \textcolor{red}{solid red curve}.}\label{figure:1}
\end{figure}
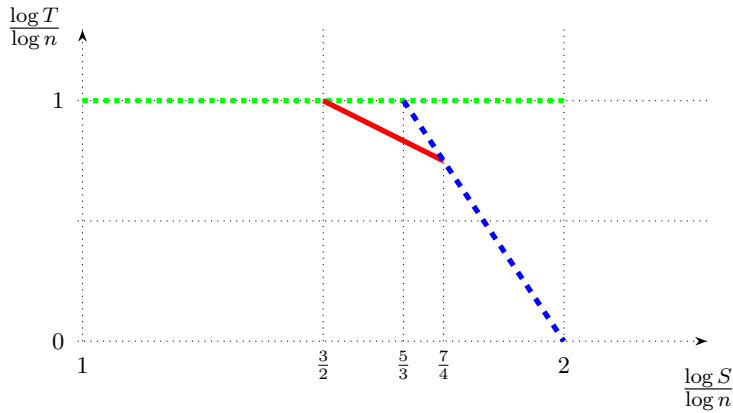

Specifically, 
we prove the following theorem.
\begin{restatable}{maintheorem}{ThreeSumThm}
\label{thm:threeSum}
For every $0\leq \delta\leq 1$, there is an $(S, T)$-algorithm for \ThreeSUMInd with space $S=\widetilde{O}(n^{2.5-\delta})$ and query time $T=\widetilde{O}(n^{\delta})$.
\end{restatable}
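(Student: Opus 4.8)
The plan is to view \ThreeSUMInd as a function‑inversion problem and then to \emph{decompose} that problem so that the improved inversion algorithm of \cite{GGPS23} — which on its own yields nothing useful here — becomes applicable to each piece. As recalled above, after the standard hashing step we may assume $a_i\in[O(n^2)]$, and then answering queries is exactly inverting $f\colon[n]^2\to[O(n^2)]$, $f(i,j)=a_i+a_j$. Running \cite{FN91} on $f$ reproduces the known $TS^3=\widetilde O(n^6)$ bound of \cite{KP19,GGHPV20}, while running \cite{GGPS23} on $f$ fails: $f$ need not satisfy the structural hypothesis that \cite{GGPS23} requires (for worst‑case $A$, e.g.\ an arithmetic progression, $f$ is roughly $n$‑to‑one), and even in the parameter window where the hypothesis happens to hold the resulting data structure is dominated by the trivial $S=T=\widetilde O(n)$ algorithm. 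So the goal is to carve $f$ into sub‑functions that \emph{do} satisfy that hypothesis.

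Concretely, I would sort $A$ and partition its indices into $m=n/b$ consecutive blocks $B_0,\dots,B_{m-1}$ of size $b$, where $b=\sqrt n$ will be the optimal choice (larger $b$ lowers the final $TS$ but eventually pushes the pieces out of the regime handled by \cite{GGPS23}). For each $p$ let $f_p\colon B_p\times[n]\to[O(n^2)]$ be the restriction $f_p(i,j)=a_i+a_j$. The domains of the $f_p$ partition $[n]^2$, so a witnessing pair for a query $y$ lies in exactly one of them, and inverting $f$ on $y$ reduces to inverting each $f_p$ on $y$ and keeping any returned pair. Each $f_p$ has domain size $bn=n^{3/2}$ and, crucially, is at most $b$‑to‑one (for a fixed $i\in B_p$ there is at most one $j$ with $a_i+a_j=y$, since we may assume the $a_j$ distinct). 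The key lemma I would prove is that this bounded multiplicity — possibly after an additional block‑local rehashing of the sums — is exactly what is needed to place $f_p$ inside the regime of \cite{GGPS23}, so that $f_p$ can be inverted with a data structure of size $S_0$ using $T_0$ probes whenever $S_0T_0=\widetilde O(bn)$ and $T_0$ lies in the relevant polynomial window. Pinning down precisely which property of $f_p$ is required, and verifying it for \emph{all} inputs $A$, is the main obstacle; everything after that is bookkeeping.

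Granting the lemma, build such a structure for every $f_p$. On a query $y$ we probe all $m$ structures; each returns a candidate pair, which we verify against a sorted copy of $A$ (an extra $\widetilde O(n)$ bits) — verification is needed because both the initial hashing and the block‑local rehashing create false positives, and because a sub‑function may have to report several preimages of the hashed target before an unhashed witness appears. The total space is $S=\widetilde O(mS_0+n)$ and the total query time is $T=\widetilde O(mT_0)$, whence
\[
TS=\widetilde O\bigl(m^2\,S_0T_0\bigr)=\widetilde O\bigl((n/b)^2\cdot bn\bigr)=\widetilde O(n^3/b),
\]
so $b=\sqrt n$ gives $TS=\widetilde O(n^{2.5})$. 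Making this quantitative, take $S_0=n^{2-\delta}$ and $T_0=n^{\delta-1/2}$ (so $S_0T_0=n^{3/2}=bn$, and $T_0$ lies in the \cite{GGPS23} window for size‑$n^{3/2}$ instances, which covers the range of $\delta$ corresponding to the improved region $n^{3/2}\ll S\ll n^{7/4}$); this yields $S=\widetilde O(n^{2.5-\delta})$ and $T=\widetilde O(n^{\delta})$. For the remaining $\delta\in[0,1]$ the statement already follows from \cite{FN91} applied to $f$ when $n^{7/4}\le S\le n^2$, and from the trivial $S=\widetilde O(n^2),\,T=\widetilde O(1)$ data structure (padded with unused space) when $S\ge n^2$. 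Taking the better of the three algorithms proves \cref{thm:threeSum}; the same template, partitioning one of the $k$ coordinates, should give the analogous bounds for \kSUMInd and \kXORInd.
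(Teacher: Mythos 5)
Your high-level plan — decompose $f(i,j)=a_i+a_j$ into sub-functions and apply the improved Fiat–Naor algorithm of \cite{GGPS23} to each piece — is the right idea, but the specific decomposition you choose does not work, and the ``key lemma'' you defer is false. Your sub-functions $f_p\colon B_p\times[n]\to[O(n^2)]$ are indexed by a block $B_p$ of first coordinates. For a query $y$ you have no way to determine which block contains the witness pair, so the online algorithm must invert \emph{all} $m=n/b$ sub-functions; this is where the factor $m^2$ in your $TS=\widetilde O(m^2 S_0T_0)$ comes from, and it is what kills the approach. On top of that, the claimed bound $S_0T_0=\widetilde O(bn)$ for inverting a $b$-to-one function on a domain of size $bn$ is not what \cite{GGPS23} gives: that result yields $S_0T_0=\widetilde O(N^{1.5})=\widetilde O((bn)^{1.5})$ for an arbitrary function on a size-$N$ domain, and $S_0T_0=\widetilde O(N)$ is only known for permutations ($1$-to-one, not $\sqrt n$-to-one). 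Note also that \cite{GGPS23} only beats Fiat–Naor when $T_0\gg S_0$, i.e.\ $T_0\gg N^{3/4}=(bn)^{3/4}=n^{9/8}$ here, whereas your target $T_0=n^{\delta-1/2}\le n^{1/2}$ is far below that threshold. Plugging in the correct $S_0T_0=\widetilde O((bn)^{1.5})$ gives $T_{\mathrm{tot}}S_{\mathrm{tot}}=\widetilde O(n^{3.5}/\sqrt b)\ge\widetilde O(n^3)$ for every $b\le n$, so this decomposition never reaches $n^{2.5}$.

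The paper's decomposition is different in exactly the way needed to avoid both problems: it draws random primes $p,q=\widetilde\Theta(n)$ and, for each residue $d\in[q]$, defines $f_d\colon[n]\to[p]$ by $f_d(i)=(a_i+a_j)\bmod p$, where $j$ is the first index with $a_i+a_j\equiv d\pmod q$ (computable in $\widetilde O(1)$ time from a sorted list of $a_j\bmod q$). Crucially, a query $y$ identifies its relevant sub-function as $f_{y\bmod q}$, so the online algorithm inverts \emph{one} sub-function — removing the $m$ factor from the time entirely — and each $f_d$ genuinely has domain and range $\widetilde\Theta(n)$, so \cite{GGPS23} gives $S_0T_0=\widetilde O(n^{1.5})$ per sub-function, hence total $ST=\widetilde O(n\cdot n^{1.5})=\widetilde O(n^{2.5})$. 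Your block decomposition cannot supply this ``query-routed-to-a-single-bucket'' property, which is the essential ingredient.
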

Our time-space tradeoff is compared to the previously best known one in \cref{figure:1}.
In particular, for 
$S = n^{5/3}$, we obtain 
$T = \widetilde{O}(n^{5/6})$,  while the previous best algorithm was the trivial one, which obtains $T =  \widetilde{O}(n)$.

We remark that the runtime of the online algorithm $\A_1$ in our algorithm in the standard RAM model is linear in its number of queries to the data structure, up to poly-logarithmic factors. Moreover, the runtime of the preprocessing algorithm $\A_0$ is $\widetilde{O}(n^2)$. (An algorithm with truly subquadratic preprocessing time $\widetilde{O}(n^{2-\eps})$ would refute the \ThreeSUM conjecture.)

We further apply a variant of our algorithm to the \kSUMInd problem, introduced in~\cite{GGHPV20}. In this problem, the input of the preprocessing algorithm is an array of positive integers $A = \{a_0,\ldots,a_{n-1} \}$, and the online algorithm for an integer challenge $y$ finds $(i_1,\ldots,i_{k-1}) \in [n]^{k-1}$ such that $a_{i_1} + \ldots + a_{i_{k-1}} = y$. \kSUMInd is a preprocessing version of \kSUM and a natural generalization of \ThreeSUMInd.

Once again, we focus on the parameter regime with sub-linear time. 
\begin{restatable}{maintheorem}{KSumThm}
\label{thm:kSum}
For every $k\geq3$ and every $0\leq \delta\leq1$, there is an $(S, T)$-algorithm for \kSUMInd with space $S=\widetilde{O}(n^{k-0.5-\delta})$ and query time $T=\widetilde{O}(n^{\delta})$.
\end{restatable}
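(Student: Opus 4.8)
The plan is to run the argument behind \cref{thm:threeSum} with the two-summand function $f(i,j)=a_i+a_j$ replaced by a $k$-summand analogue that \emph{still} decomposes into injective sub-functions over a domain of size exactly $n$. Split the $k-1$ summands as $(k-2)+1$: let $g\colon[n]^{k-2}\to\Z$, $g(i_1,\dots,i_{k-2})=a_{i_1}+\dots+a_{i_{k-2}}$, and set
\[
F\colon[n]^{k-2}\times[n]\to\Z,\qquad F(\vec i,j)=g(\vec i)+a_j .
\]
A solution to the \kSUMInd query $y$ is exactly a preimage of $y$ under $F$ (together with the witness for the value of $g$, which preprocessing records once per value). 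After deduplicating $A$ (w.l.o.g.\ the $a_j$ are distinct), for every fixed $\vec i$ the map $j\mapsto F(\vec i,j)$ is injective, so $F$ is a union of $p:=n^{k-2}$ injective sub-functions each of domain size $m:=n$, with total domain size $N:=pm=n^{k-1}$. The only change from $k=3$ is that the index set of sub-functions grows from $[n]$ to $[n]^{k-2}$ while $m$ stays equal to $n$, and $m$ is precisely the quantity controlling the admissible parameter range of the \cite{GGPS23}-based speedup. This also pins down why $(k-2)+1$ is the right split: any split $(k-1)=a+b$ still has total domain $n^{k-1}$, but forces the sub-function domain to be $n^{\min(a,b)}$, which both worsens the trade-off and shrinks the range in which the \cite{GGPS23} improvement applies; taking $b=1$ minimizes it.

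Next I would dispose of the $k$-independent bookkeeping. As in \cref{thm:threeSum}, apply a hash $h(x)=cx\bmod q$ with random $c$ and a prime $q=\widetilde\Theta(n^{k-1})$, so that all relevant sums land in $[\widetilde{O}(n^{k-1})]$; a query for $y$ becomes a search for tuples with $\sum_\ell h(a_{i_\ell})\equiv h(y)\pmod q$, accounting for a constant (depending on $k$) number of shifted targets $h(y),h(y)+q,\dots$ coming from wraparound. The expected number of spurious tuples (satisfying the hashed equation but not $\sum_\ell a_{i_\ell}=y$) is $\widetilde{O}(1)$, so each reported candidate is verified over $\Z$ at no asymptotic cost. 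One subtlety: after hashing, distinct $\vec i$ may induce the \emph{same} sub-function $j\mapsto h(a_j)+\tilde g(\vec i)$; this is harmless, since all we use is injectivity in $j$, and presenting $F$ with its full domain $[n]^{k-2}\times[n]$ keeps $N=n^{k-1}$ and $m=n$ as stated.

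The core step is to invert $F$ with exactly the decomposed-Fiat-Naor / \cite{GGPS23} machinery developed for \cref{thm:threeSum}. I would isolate that machinery as a stand-alone inversion lemma: a function that is a union of $p$ injective sub-functions of common domain size $m$ (with $N=pm$) can be preprocessed into $\widetilde{O}(S)$ space so that inversion queries cost $\widetilde{O}(T)$ probes, with $T\cdot S=\widetilde{O}(N\cdot m^{1/2})$, for every $T$ in the range permitted by the \cite{GGPS23} improvement (for \ThreeSUMInd this was $T\in[m^{3/4},m]$). Plugging in $N=n^{k-1}$, $m=n$ yields $S=\widetilde{O}(n^{k-1/2-\delta})$, $T=\widetilde{O}(n^{\delta})$ for $3/4\le\delta\le1$; for $k=3$ this recovers \cref{thm:threeSum}. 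For $0\le\delta\le3/4$ the claimed bound already follows from the Fiat-Naor-based algorithm of \cite{KP19,GGHPV20} applied to $F$ with $N=n^{k-1}$, which gives $T S^3=\widetilde{O}(n^{3(k-1)})$, hence $S=\widetilde{O}(n^{k-1-\delta/3})\le\widetilde{O}(n^{k-1/2-\delta})$ exactly when $\delta\le3/4$; the two regimes together cover all of $[0,1]$. Finally, building the sub-function tables takes $\widetilde{O}(N)=\widetilde{O}(n^{k-1})$ time (after an $\widetilde{O}(n^{k-2})$ precomputation of $g$), and the online algorithm's RAM running time is $\widetilde{O}(T)$, matching the remarks made for $k=3$.

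The main obstacle I expect is in the last step: I must verify that the \cite{GGPS23}-based inversion lemma extracted from the \ThreeSUMInd proof depends \emph{only} on $N$ and on the sub-functions' common domain size $m$, and not on the cardinality of the sub-function index set --- i.e.\ that nothing in the construction of the iterated tables or in the analysis of chain lengths and collision probabilities silently uses that there are $n$ sub-functions rather than $n^{k-2}$. Once the lemma is stated at that level of generality, which I expect to be essentially free, the rest of the proof of \cref{thm:kSum} is a routine re-parametrization of the $k=3$ argument.
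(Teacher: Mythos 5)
Your high-level strategy matches the paper: group the first $k-2$ summands as $B = (k-2)A$ (equivalently, define $g(\vec i) = a_{i_1}+\dots+a_{i_{k-2}}$ and $F(\vec i,j)=g(\vec i)+a_j$), so that \kSUMInd reduces to the asymmetric $\ThreeSUMInd(n,m)$ problem with $m=n^{k-2}$, and the numbers you track ($n^{k-2}$ sub-functions, each of domain size $n$, total domain $n^{k-1}$, $T\cdot S=\widetilde O(n^{k-1/2})$) are exactly the paper's. The paper's proof is then a two-line invocation of its $\ThreeSUMInd(n,m)$ theorem (\cref{thm:threeTSum}), with $S=\widetilde O(n^{1.5-\delta}\cdot m)$ and $T=\widetilde O(n^\delta)$ holding for the entire range $\delta\in[0,1]$ (so your two-regime patch via Fiat--Naor for $\delta\le 3/4$, while harmless, is unnecessary).

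However, the way you characterize the key decomposition is not the one the paper uses, and the ``inversion lemma'' you propose to extract does not hold as you state it. You decompose $F$ into the $n^{k-2}$ \emph{fibers} $j\mapsto g(\vec i)+a_j$ (one per $\vec i$), and observe these are injective. That observation is not what drives the algorithm, for two reasons. First, the paper's sub-functions $f_d\colon[n]\to[p]$ are indexed by a residue $d\in[q]$ with $q=\widetilde\Theta(m)$, and each $f_d(i)$ takes the \emph{$A$-index} $i$ as input and internally looks up (by binary search in a sorted table of $b_j\bmod q$) the unique $j$ with $a_i+b_j\equiv d\pmod q$; these $f_d$ are not injective, nor do they need to be. Second, and more importantly, what makes \cref{thm:sub-functions} work is the existence of a \emph{query routing map} $\MAP_1(y)=y\bmod q$ that points the online algorithm to the single sub-function it must invert. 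Your fiber decomposition has no analogue of $\MAP_1$: given a query $y$, nothing tells you which $\vec i$ to use, so you would have to try all $n^{k-2}$ fibers. Consequently the lemma you hope is ``essentially free'' --- that any function which is a union of $p$ injective sub-functions of domain size $m$ can be inverted with $T\cdot S=\widetilde O(N\,m^{1/2})$ --- is false in that generality (e.g., take $n$ independent random injections $f_i\colon[n]\hookrightarrow[n^2]$ and $F(i,j)=f_i(j)$: each fiber is injective, but without a way to determine $i$ from $y$ one cannot localize to a single small sub-problem). Your stated ``main obstacle'' (whether the lemma silently depends on the number of sub-functions) is the wrong worry; the real requirement is the presence of the maps $\MAP_1,\MAP_2,\TR$ and efficient evaluability of $f_d$ from a small auxiliary string, which the paper arranges through the two random primes $p=\widetilde\Theta(n)$, $q=\widetilde\Theta(m)$ (CRT-style split of the universe). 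Once you replace the fiber decomposition with this residue-class decomposition, the rest of your bookkeeping goes through.
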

We remark that there is a trivial algorithm for \kSUMInd with $S = \widetilde{O}(n^{k-2})$ and $T = \widetilde{O}(n)$, while an application of the Fiat-Naor algorithm~\cite{GGHPV20} gives $S = \widetilde{O}(n^{k - 1 -\delta/3})$ and $T = \widetilde{O}(n^{\delta})$. Our algorithm is better than the previous algorithms in the range $n^{k - 1.5} \ll S \ll n^{k - 1.25}$.
We further note that there are other trivial algorithms for \kSUMInd in the regime of super-linear~$T$, and that our algorithm can be easily extended to this regime too (as done in~\cite{GGHPV20} by taking larger values of $\delta$).

Next, we adapt our algorithm to the \kXORInd problem. In \kXORInd, the input to be preprocessed is an array of $n$ vectors in $\{0,1\}^\ell$. Then, given a query vector $y\in\{0,1\}^\ell$, the goal is to check if there are $(k-1)$ input vectors whose XOR is~$y$.

\begin{restatable}{maintheorem}{kXORThm}
\label{thm:kXor}
For every $k\geq3$ and every $0\leq \delta\leq1$, there is an $(S, T)$-algorithm for \kXORInd with space $S=\widetilde{O}(n^{k-0.5-\delta})$ and query time $T=\widetilde{O}(n^{\delta})$.
\end{restatable}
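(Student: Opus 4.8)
The plan is to mirror the \kSUMInd construction of \cref{thm:kSum}, replacing integer addition by XOR throughout and checking that every structural ingredient survives the change. Concretely, after the usual hashing reduction we may assume the $n$ input vectors live in $\{0,1\}^\ell$ with $\ell = O(\log n)$, so the universe has size $\poly(n)$, and we may take $k-1 = (k-2) + 1$ and split the unknown tuple $(i_1,\dots,i_{k-1})$ into a ``block'' part of $k-2$ indices and one remaining index. The key point is that the map $(i_1,\dots,i_{k-2}) \mapsto a_{i_1} \oplus \cdots \oplus a_{i_{k-2}}$ plays exactly the role that the $(k-2)$-fold sumset played in the \kSUMInd proof: enumerating its image takes time $\widetilde{O}(n^{k-2})$ in preprocessing, it has domain of size $n^{k-2}$, and — crucially — XOR is commutative and associative, so the group-theoretic properties used to decompose the inversion function into well-behaved sub-functions (regularity / bounded collision structure of the pieces, the ability to ``peel off'' one coordinate by computing $y \oplus a_{i_{k-1}}$) go through verbatim over $(\{0,1\}^\ell, \oplus) = \Z_2^\ell$ instead of $(\Z_M, +)$.

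First I would restate the generic engine extracted from the proof of \cref{thm:kSum}: one reduces \kSUMInd over an abelian group to a collection of Function Inversion instances on ``sub-functions'' obtained by fixing some of the coordinates, where each sub-function has the regularity property needed to invoke the \cite{GGPS23} improvement of Fiat--Naor, and where the advice and query costs of the pieces compose to $S = \widetilde{O}(n^{k-0.5-\delta})$, $T = \widetilde{O}(n^\delta)$. Then I would observe that nowhere in this reduction is the group assumed to be $\Z$ or torsion-free; the only facts used are: (i) membership/arithmetic in the group costs $\widetilde{O}(1)$, (ii) the group operation is associative and commutative so partial sums can be precomputed and recombined, (iii) a challenge $y$ and a guessed coordinate $a_i$ determine the required value $y \ominus a_i$ of the remaining sub-instance, and (iv) the relevant counting bounds (how many tuples map to a given value) are controlled by a generic hashing/balls-in-bins argument that is oblivious to the group. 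For $\Z_2^\ell$, (i)–(iv) all hold, with $\ominus = \oplus$. Hence the same bookkeeping yields the claimed tradeoff.

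Second I would handle the one genuinely XOR-specific subtlety: in the integer setting the reduction first hashes $A$ so that collisions in the sumset are rare and sizes concentrate; over $\Z_2^\ell$ the analogue is to apply a random \emph{linear} hash $h: \{0,1\}^{\ell_0} \to \{0,1\}^\ell$ (or a pairwise-independent family compatible with XOR), so that $h(v_1) \oplus \cdots \oplus h(v_{k-1}) = h(v_1 \oplus \cdots \oplus v_{k-1})$; this preserves the \kXOR relation exactly (no false positives/negatives once we verify a candidate against the real vectors) and gives the same concentration bounds for the sizes of the fibers of the partial-XOR maps. After this, I would verify that the decomposition into sub-functions and the application of \cite{GGPS23} are literally the same computation as in \cref{thm:kSum}, so the final bound is $S=\widetilde{O}(n^{k-0.5-\delta})$, $T=\widetilde{O}(n^{\delta})$, and likewise that the RAM runtime of $\A_1$ is $\widetilde{O}(T)$ and preprocessing is $\widetilde{O}(n^{k-1})$ (or $\widetilde{O}(n^2)$ when $k=3$, as stated for \ThreeSUMInd).

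The main obstacle I anticipate is not conceptual but one of careful abstraction: the proof of \cref{thm:kSum} is presumably written with additive notation and may silently use $\Z$-specific conveniences (e.g.\ ordering of integers for binary search inside a sorted sumset, or sign tricks), and I must replace each such use with an order-oblivious, group-agnostic primitive (sort by bit-string value, which is fine; replace ``$y - a_i$'' by ``$y \oplus a_i$''; replace any appeal to the additive structure of intervals by a generic hashing bound). So the real work is to factor the \kSUMInd argument through a clean ``for any abelian group $G$ with $\widetilde{O}(1)$-time operations'' interface and then instantiate $G = \Z_2^\ell$; once that refactoring is done the theorem follows with essentially no new estimates.
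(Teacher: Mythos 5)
Your proposal matches the paper's approach: the paper proves \cref{thm:kXor} by setting $B=(k-2)A$ over $\F_2^\ell$ and re-running the proof of \cref{thm:threeTSum} with the random primes $p,q$ (and the maps $y \mapsto y\bmod p$, $y \mapsto y\bmod q$) replaced by uniformly random full-rank matrices $P\in\F_2^{p\times\ell}$, $Q\in\F_2^{q\times\ell}$ for $p=\log n+O(1)$, $q=\log m+O(1)$ — exactly the ``random linear projection / fix coordinates after a random change of basis'' device you describe — after which $f_d$, $\MAP_1$, $\MAP_2$, $\TR$, the binary search over sorted $Qb_j$, and the analogues of the collision and uniqueness bounds (\cref{eq:0,eq:one,eq:two}) go through verbatim. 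The one bookkeeping difference is that the paper does not first hash $\ell$ down to $O(\log n)$; it keeps $\ell$ arbitrary, dispatches $\ell<1.5\log n$ trivially, and otherwise applies $P,Q$ directly to the original $\ell$-bit vectors — your preliminary linear hash is harmless (verification kills false positives) but unnecessary.
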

As an additional application, we use the reduction of Bille et al.~\cite{BGLPRS24} from the \JI problem to \ThreeSUMInd, and conclude an immediate improvement for it as well.

In the \JI problem (also known as \problem{Histogram Indexing}), the input to be preprocessed is a string~$S$ of length~$n$ over an alphabet~$\Sigma$. Given a query histogram $h\in\Z_{\geq0}^{|\Sigma|}$ (i.e., the number of occurrences of each letter from $\Sigma$), the task is to check if $S$ contains a substring whose histogram matches~$h$.

In the case of the binary alphabet $|\Sigma|=2$, Cicalese et al.~\cite{CFL09} gave an efficient algorithm with $S=O(N)$ and $T=O(1)$ solving \JI. For larger alphabets, Kociumaka, Radoszewski, and Rytter~\cite{KRR13} designed an algorithm that solves \JI in space $S=\widetilde{O}(n^{2-\delta'})$ and time $T=\widetilde{O}(n^{(2|\Sigma|-1)\delta'})$. Then, Chan and Lewenstein~\cite{CL15} improved the bound to $S=\widetilde{O}(n^{2-\delta'})$ and $T=\widetilde{O}(n^{(|\Sigma|+1)\delta'/2})$ by utilizing algorithms for a variant of \ThreeSUM. Finally, \cite{BGLPRS24} used a reduction from \JI to \ThreeSUMInd, together with the \ThreeSUMInd algorithm of \cite{KP19,GGHPV20}, resulting in an upper bound of $S=\widetilde{O}(n^{2-\delta'/3})$ and $T=\widetilde{O}(n^{\delta'})$. This bound improves on the previous bounds for all alphabets of size $|\Sigma|>5$. We further improve algorithms for \JI, achieving $S=\widetilde{O}(n^{2.5-\delta})$ and $T=\widetilde{O}(n^{\delta})$. This algorithm outperforms~\cite{BGLPRS24} for all $n^{3/2} \ll S \ll n^{7/4}$, and it improves on~\cite{CL15} for all $|\Sigma|>1 + 1/\delta'$.

\begin{restatable}{maincorollary}{JICor}
\label{cor:ji}
For every $0\leq \delta\leq 1$ and an alphabet $\Sigma$ of constant size $|\Sigma|=O(1)$, there is an $(S, T)$-algorithm for \JI with space $S=\widetilde{O}(n^{2.5-\delta})$ and query time $T=\widetilde{O}(n^{\delta})$.
\end{restatable}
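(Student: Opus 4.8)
The plan is to derive \cref{cor:ji} as a black-box consequence of \cref{thm:threeSum} together with the reduction from \JI to \ThreeSUMInd of Bille et al.~\cite{BGLPRS24}. Given a string $S$ of length $n$ over a constant-size alphabet $\Sigma$, consider its prefix histograms $H[0],\ldots,H[n]\in\Z_{\geq0}^{|\Sigma|}$, where $H[t]$ records how many times each letter occurs in the first $t$ characters of $S$. A substring of $S$ realizes the histogram $h$ if and only if there are indices $i\leq j$ with $H[j]-H[i]=h$; and since prefix histograms are coordinate-wise non-decreasing, for any fixed nonzero $h\in\Z_{\geq0}^{|\Sigma|}$ the constraint $i\leq j$ is automatically implied by $H[j]-H[i]=h$, so the ordering is free and one only needs to decide whether the query vector $h$ lies in the difference set $\{H[j]-H[i]\}$.

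First I would encode each vector $H[t]$ as a single integer $a_t$ by reading its $|\Sigma|$ coordinates (each lying in $[0,n]$) as the digits of a base-$(2n+1)$ number; because $|\Sigma|=O(1)$, this yields integers $a_t\in[n^{O(1)}]$, and the large base guarantees that the encoding is additively faithful (no carries) for sums and differences of two prefix histograms, so $H[j]-H[i]=h$ is equivalent to $a_j+(-a_i)=y$, where $y$ encodes $h$. Forming the $2(n+1)$-element array obtained by appending $(-a_0,\ldots,-a_n)$ to $(a_0,\ldots,a_n)$ (with the two halves tagged so that a valid pair uses one element from each), this is precisely a \ThreeSUMInd instance on $O(n)$ integers drawn from a universe of size $n^{O(1)}$. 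Hashing the universe down to $[n^2]$ in the standard way (with poly-logarithmic overhead, exactly as in the treatment of \ThreeSUMInd recalled in the introduction) then puts us in the setting of \cref{thm:threeSum}, and any pair $(i,j)$ returned by the \ThreeSUMInd algorithm translates back into a concrete substring of $S$ with histogram $h$, while $\bot$ certifies that no such substring exists.

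It remains to instantiate \cref{thm:threeSum} with the same parameter $\delta$ on this instance, which gives a data structure of size $S=\widetilde{O}(n^{2.5-\delta})$ answering each \JI query in time $T=\widetilde{O}(n^{\delta})$ --- the constant-size query histogram contributes no additive term, and the $\widetilde{O}(n^2)$ preprocessing of \cref{thm:threeSum} keeps the overall preprocessing polynomial. I do not expect a genuine obstacle here, since all the work is already contained in \cref{thm:threeSum}; the only points that need checking are bookkeeping ones: that the reduction of~\cite{BGLPRS24} increases the number of integers only by a constant factor (so the exponents of $n$ carry over verbatim), that it uses only the ``output one witness'' guarantee of \ThreeSUMInd rather than a reporting variant, and that the encoded universe remains polynomial in $n$ --- which is exactly where the hypothesis $|\Sigma|=O(1)$ enters.
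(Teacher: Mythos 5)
Your proposal is correct and follows the same route as the paper: the paper's proof is exactly the two-line combination of the \cite{BGLPRS24} reduction (stated as \cref{thm:JI}) with \cref{thm:threeSum}, and what you add is an unrolled sketch of that reduction (prefix histograms, base-$(2n+1)$ packing, appending negated copies with a tag bit), which is what \cite{BGLPRS24} does under the hood. The only bookkeeping details you gloss over but would need to finish (shifting by a large constant to keep the integers non-negative as \cref{def:ksum} requires, and making the tagging concrete so that only cross-half pairs are valid sums) are standard and do not affect the exponents.
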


\section{Technical Overview}
In this section we give a technical overview of our work. 
Since it is based on 
previous algorithms for the Function Inversion problem, we start by recalling them.

\subsection{The Hellman and Fiat-Naor Algorithms for Function Inversion}

\paragraph{Hellman's algorithm.} The research of non-uniform cryptanalytic time-space tradeoffs was initiated by Hellman~\cite{H80} in 1980. Hellman considered the problem of inverting a  function $f: [N] \to [N]$ using a two-phase algorithm. The preprocessing algorithm has full access to $f$ and computes an advice string of $S$ bits. The online algorithm receives as input the advice string and a challenge $y \in [N]$. In addition, the online algorithm is allowed to make $T$ oracle queries to~$f$. Its goal is to find a preimage $x \in f^{-1}(y)$ (if such a preimage exists).

Hellman gave a heuristic algorithm for the Function Inversion problem with a time-space tradeoff of $T S^2 = \widetilde{O}(N^2)$, assuming that $f$ is chosen uniformly at random. For parameters $s$ and $t$, the main data structure computed by the preprocessing algorithm is a table constructed via $s$ chains. Each chain starts from a uniformly chosen point $x \in [N]$ and is computed via $t$ iterative calls to $f$, where only $x$ and the endpoint $f^{(t)}(x)$ are stored in the table.\footnote{We use the notation $f^{(t)}$ to represent $f$ composed with itself
$t$ times.} The pairs of start and endpoints are sorted according to the endpoints. 

Given this table and a point $y \in [N]$ to invert, the online algorithm computes up to $t$ iterations of $f$ starting from $y$, and checks if any of them is an endpoint of a chain stored in the table. If~it reaches such an endpoint, it restarts the computation of the chain from the corresponding start point, aiming to reach $y$, thus successfully inverting it in time $\widetilde{O}(t)$. 

The online algorithm succeeds in inverting $y$ if it is covered by one of the chains in the table. Unfortunately, for a uniformly chosen function, one can cover only a small fraction of the image points of $f$ with a single table (assuming $s \ll N$). More specifically, if $s t^2 \approx N$, and we have already computed $s$ chains of length $t$, by the birthday paradox, an additional chain of length $t$ collides with a previous one with high probability. Thus, additional chains do not add much to the coverage of the table.
Consequently, a single table only covers about
$s \cdot t \approx N/t$ image points using $\widetilde{O}(s)$ space.

Hellman's (heuristic) solution was to compute about $t$ such tables, each computed with a different variant of $f$, defined by composing it with a simple permutation, such that inverting a variant of $f$ is equivalent to inverting $f$. 
Heuristically, these $t$ tables cover most of the image points of~$f$. Thus, the preprocessing advice consists of $t$ tables of space about $s$ bits and the online algorithm searches (essentially) each of them in time $t$. Overall, we have 
$S \approx st$ and $T \approx t^2 $, giving 
$T S^2 \approx t^2 \cdot (s t)^2 \approx N^2$.

\paragraph{The Fiat-Naor algorithm.} Fiat and Naor~\cite{FN91} made Hellman's algorithm rigorous by composing $f$ with a $k$-wise independent function for an appropriate choice of $k$ (the details are not important for this paper). More generally, Fiat and Naor considered the problem of inverting an arbitrary function $f: [N] \to [N]$ with collision probability $C(f) := \Pr_{x,x' \sim [N]} [f(x) = f(x')]$, and extended the time-space tradeoff for such a function to $T S^2 = \widetilde{O}(N^3 \cdot C(f))$. 

For functions with a large collision probability, one can do better by artificially decreasing the effective collision probability. Specifically, the advice string additionally consists of (roughly) $S$ points sampled uniformly at random, as well as their images. 
Using an appropriately chosen pseudorandom function (that is shared between the preprocessing and online algorithms),
these images are then bypassed when iterating $f$ for computing the chains in the tables. The online algorithm first checks if $y$ is contained in the image set in the advice string, and if so, it outputs the corresponding preimage. Otherwise, it tries to invert $y$ using the tables. 

Observe that with high probability, all images whose preimage size is at least $\widetilde{\Omega}(N/S)$ are included in the image set. Thus, with high probability, the collision probability of $f$ is effectively reduced to at most $\widetilde{O}(1/S)$.
This leads to a worst-case time-space tradeoff of
$T S^2 = \widetilde{O}(N^3/S)$ (i.e.,
$T S^3 = \widetilde{O}(N^3)$) for inverting any function.

\paragraph{The improvement by~\cite{GGPS23}.}
It was observed in~\cite{GGPS23} that the Fiat-Naor worst-case time-space tradeoff can be improved. 
Specifically,~\cite{GGPS23} proposed an alternative to the preprocessing algorithm that originally included the set of uniformly chosen points and their images in the advice string. In this alternative, using shared randomness, the preprocessing and online algorithms compute such a set of size (roughly) $T$ from a shared random seed. In the worst-case, this effectively reduces the collision probability of $f$ to about $1/T$ and results in a time-space tradeoff of 
\begin{align}
\label{eq:improve}
 T S^2 = \widetilde{O}(N^3/T)   
\end{align}
(i.e., $T S = \widetilde{O}(N^{3/2})$). This improves upon the Fiat-Naor tradeoff in the case when $S \ll T$ (or $S \ll N^{3/4}$). On the other hand, we note that the computation of the set of points and their images from the shared random seed is non-uniform (i.e., it is not efficient in the standard RAM model).

We further remark that when $T \gg S$, the online algorithm of~\cite{GGPS23} has space complexity of $\widetilde{\Omega}(T) \gg S$. While this online space is ignored in the preprocessing model of computation, this can be viewed as a disadvantage of the algorithm. 

\subsection{Our Techniques}
Recall that the best known worst-case time-space tradeoff for \ThreeSUMInd is 
$T S^3 = \widetilde{O}(n^6)$. It was derived in~\cite{KP19,GGHPV20} by applying the Fiat-Naor tradeoff to the function $f(i,j) = a_i+a_j$ (one may assume that $f:[n^2] \to [n^2]$ by standard hashing techniques). 
Moreover, there is a trivial time-space tradeoff of $T = S = \widetilde{O}(n)$, while clearly $S = \widetilde{\Omega}(n)$ must hold for any algorithm that succeeds answering all the queries.
Thus, the only relevant parameter range is $T \ll n$ (and $S = \widetilde{\Omega}(n)$). 
Consequently, the improvement of~\cite{GGPS23} to the Fiat-Naor tradeoff (effective only when $S \ll T$) is not directly applicable to \ThreeSUMInd. 

\paragraph{An initial improvement in the cell-probe model.}
A closer look reveals that a variant of the improvement of~\cite{GGPS23} is applicable in the cell-probe model, where we allow the online algorithm to be inefficient as long as it \emph{reads} only a few positions of the advice.
Namely, we measure its complexity only by the number of queries it makes to the advice string (each query reads a word of a poly-logarithmic number of bits). 

Specifically, in order to reduce the effective collision probability of $f$, both the preprocessing and online algorithms sample the same uniform subset of $A$, denoted $A'$, of size (about) $T$. Now the online algorithm will correctly answer all queries from the set $A' + A'$ at no cost (as this does not require reading any part of the advice). Next, we again define $f(i,j)=a_i+a_j$ with one modification: whenever $a_i+a_j$ falls into $A'+A'$, $f$ just outputs a pseudorandom value. With high probability, this reduces  the (worst-case) collision probability of~$f$ to about $1/T^2$, resulting in a time-space tradeoff of about $T S^2 = \widetilde{O}(n^6/T^2)$, or $T^3 S^2 = \widetilde{O}(n^6)$.
This tradeoff is better than the known one 
$T S^3 = \widetilde{O}(n^6)$ when
$S \ll T^2$ (or $S \ll n^{12/7}$).

\paragraph{Our algorithm.}
We devise an improved algorithm with the tradeoff of $T S = \widetilde{O}(n^{2.5})$. 
This tradeoff is equivalent to $T^2 S^2 = \widetilde{O}(n^{5})$, and since $T \ll n$, it is always better than the previous one of $T^3 S^2 = \tilde{O}(n^6)$ for all relevant parameter settings. Moreover, it is obtained by an \emph{efficient} uniform online algorithm, rather than  in the cell-probe model. Since this algorithm strictly improves upon the cell-probe algorithm, we will not consider the cell-probe algorithm in the remainder of this paper.

We now sketch the details of our improved algorithm.
The improvement still involves applying the technique of~\cite{GGPS23}. However, we first derive an alternative way to apply the Fiat-Naor algorithm to \ThreeSUMInd which ``breaks down'' the function $f(i,j) = a_i + a_j$ into about $n$ ``sub-functions'' with domain and range of size about $n$. 
The preprocessing algorithm will apply the Fiat-Naor preprocessing algorithm to each such function independently (with space reduced by a factor of $n$).

Given a query $y \in A + A$, the task of the online algorithm will be reduced to inverting only a single such sub-function. Moreover, given the advice string, each sub-function will be efficiently computable. 

Since both the space and the range size of each sub-function is reduced by a factor of about~$n$, 
while only one function is inverted online (and it is efficiently computable), 
we can obtain the time-space tradeoff $T (S/n)^3 = \widetilde{O}(n^3)$, or $T S^3 = \tilde{O}(n^6)$. This only recovers the known Fiat-Naor tradeoff for \ThreeSUMInd. Yet, now the improvement of~\cite{GGPS23} will be much more noticeable: the task of the online algorithm is reduced to inverting a single function with range size of about $n$, giving the tradeoff $T (S/n)^2 = \tilde{O}(n^3/T)$, or $T S = \tilde{O}(n^{2.5})$. This tradeoff is similar to~\cref{eq:improve}, with the space divided by $n$ (as each sub-function is preprocessed separately). 

Interestingly, we also observe that unlike the original application of~\cite{GGPS23} to the Function Inversion problem (where the online space complexity exceeded the advice string length),
the space complexity of our online algorithm is still $\widetilde{O}(S + T) = \widetilde{O}(S)$ (as $S \gg T$).
For a similar reason, unlike~\cite{GGPS23}, our algorithm does not require any shared randomness between the preprocessing and online algorithms, while running efficiently in the RAM model. 
Specifically, rather than a seed that needs to be expanded inefficiently, our advice string includes the actual set of uniformly chosen points for each sub-function. Moreover, as all sub-functions can share the same set (up to a simple translation), we include only a single set. In our setting, this has negligible overhead in terms of the length of the advice string.

It remains to show how the sub-functions are defined. In order to fulfill the above constraints, the definition crucially relies on the additive structure of the problem. 
In particular, the preprocessing algorithm draws two random primes $p,q = \widetilde{\Theta}(n)$, so that it is sufficient to solve \ThreeSUMInd $\bmod \,pq$. 
For $d \in [q]$, the function $f_d: [n] \to [p]$ only iterates over pairs $(i,j) \in [n]^2$ such that $a_i + a_j \equiv d \bmod q$. On input $i \in [n]$, it is defined by computing (the first) $j \in [n]$ such that $a_{i} + a_{j} \equiv d \bmod q$, and returning $(a_{i} + a_{j}) \bmod p$ (if such $j$ does not exist, it returns a predefined  value chosen at random). Thus, inverting $y \in A + A$ is reduced to computing $f^{-1}_{y \bmod q}(y \bmod p)$. In order for each $f_d$ to be efficiently computable, the advice string also includes a sorted array consisting of the values $a_1 \bmod q, \ldots, a_n \bmod q$. 
Thus, on input $i \in [n]$, (the first) $j \in [n]$ such that $a_{i} + a_{j} \equiv d \bmod q$ is computed by binary search for $d - a_{i} \bmod q$.

The idea of improving the Fiat-Naor algorithm by defining appropriate efficiently-computable sub-functions is rather generic, and may find additional applications besides 
\ThreeSUMInd and related problems with additive structure. Thus, we first devise a general improved algorithm for Function Inversion, assuming that the inverted function can be broken down into sub-functions with certain constraints. We then apply this algorithm to \ThreeSUMInd.

\section{Preliminaries}
We denote the sets of integers and non-negative integers by $\Z$ and $\Z_{\geq0}$, respectively.
For a positive integer $m$, $[m]$ denotes the set of integers $\{0,\ldots,m-1\}$. 
For two vectors $x,y\in\{0,1\}^\ell$, by $x\oplus y\in\{0,1\}^\ell$ we denote their bitwise XOR.
For two sets of integers $A,B\subset\Z$, we use $A+B$ to denote the set of pairwise sums $A+B=\{a+b\colon\, a\in A, b\in B\}$.  
For a positive integer $k$ and a set $A\subset\Z$ of integers, by $kA$ we denote the set of all $k$-wise sums of elements from~$A$: 
\[
kA = \{a_1+\ldots+a_k\colon\, a_1,\ldots,a_k\in A \}\;.
\]
For a function $f\colon[N]\to[N]$, $\Im(f)$ denotes the image of~$f$, and $f^{-1}(y)=\{x\colon\,f(x)=y\}$. 

We use $\log(\cdot)$ to denote the logarithm base~2, i.e., $\log(2^n) = n$, and we use $\ln(\cdot)$ to denote the natural logarithm. 

The $\widetilde{O}(\cdot)$ and $\widetilde{\Omega}(\cdot)$ notations suppress factors poly-logarithmic in the input length. For example, for an input $A=(a_1,\ldots, a_n)\in\Z_{\geq0}^n$, where $M=\max_i{a_i}$, the running time $\widetilde{O}(n^\delta)$ stands for $n^\delta \cdot\poly(\log(n)+\log(M))$. Similarly, the $\ll$ and $\gg$ notations suppress factors poly-logarithmic in the input length.

We will use the following version of the prime number theorem, where $\pi(n)$ denotes the number of primes in the interval $[n]$ (see, e.g.,~\cite{E49}).

\begin{theorem}[Prime Number Theorem]\label{thm:primeNumber}
For every $\eps>0$, there exists $n_0$ such that for all $n>n_0$, it holds that
\[
(1-\eps)n/\ln(n) \leq \pi(n) \leq (1+\eps)n/\ln(n) \;.
\]
\end{theorem}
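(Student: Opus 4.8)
The statement is the Prime Number Theorem, so the plan is to follow the classical analytic route: reduce $\pi(n)\sim n/\ln n$ to an asymptotic for a Chebyshev function, deduce the non-vanishing of $\zeta(s)$ on the line $\Re(s)=1$, and close the argument with an analytic Tauberian theorem. (An elementary alternative, matching the cited reference~\cite{E49}, is sketched at the end.) First I would introduce $\vartheta(x)=\sum_{p\le x}\ln p$ and $\psi(x)=\sum_{p^k\le x}\ln p=\sum_{n\le x}\Lambda(n)$, and show by Abel summation that the three statements $\pi(x)\sim x/\ln x$, $\vartheta(x)\sim x$, and $\psi(x)\sim x$ are equivalent: $\psi(x)-\vartheta(x)=\sum_{k\ge 2}\vartheta(x^{1/k})=O(\sqrt{x}\log x)$ is negligible, and $\pi(x)=\vartheta(x)/\ln x+\int_2^x \vartheta(t)/(t\ln^2 t)\,dt$ transfers the asymptotic both ways. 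Once $\vartheta(x)=x+o(x)$ is known, the two-sided bound $(1-\eps)n/\ln n\le\pi(n)\le(1+\eps)n/\ln n$ for all $n>n_0(\eps)$ is immediate, which is exactly the claimed form. I would also record Chebyshev's elementary estimate $\vartheta(x)=O(x)$ (via the divisibility $\prod_{n<p\le 2n}p\mid\binom{2n}{n}\le 4^n$ and telescoping), since boundedness of $\vartheta(x)/x$ is needed for the Tauberian step.

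Next, the analytic input. For $\Re(s)>1$ the Euler product gives $\zeta(s)=\prod_p(1-p^{-s})^{-1}\ne 0$ and $-\zeta'(s)/\zeta(s)=\sum_n\Lambda(n)n^{-s}$; it is standard that $\zeta$ continues meromorphically to $\Re(s)>0$ with a single simple pole, at $s=1$ with residue $1$ (e.g.\ from $\zeta(s)=\tfrac{s}{s-1}-s\int_1^\infty\{x\}x^{-s-1}\,dx$). The crux --- and the step I expect to be the main obstacle --- is showing $\zeta(1+it)\ne 0$ for all real $t\ne 0$. I would use the de la Vall\'ee Poussin trick: taking real parts of $\log\zeta$ over the Euler product and using the pointwise inequality $3+4\cos\theta+\cos 2\theta=2(1+\cos\theta)^2\ge 0$ gives $|\zeta(\sigma)^3\,\zeta(\sigma+it)^4\,\zeta(\sigma+2it)|\ge 1$ for every real $\sigma>1$. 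If $\zeta$ vanished at $1+it_0$, then letting $\sigma\to 1^+$ the triple pole of $\zeta(\sigma)^3$ contributes $(\sigma-1)^{-3}$, the (at least) fourth-order zero of $\zeta(\sigma+it_0)^4$ contributes $O((\sigma-1)^4)$, and $\zeta(\sigma+2it_0)$ stays bounded, so the product tends to $0$ --- a contradiction. Consequently $\zeta(s)\ne 0$ on the closed half-plane $\Re(s)\ge 1$, and $-\zeta'(s)/\zeta(s)-1/(s-1)$ extends holomorphically to an open neighborhood of it.

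Finally I would close with Newman's analytic Tauberian theorem: if $f:[0,\infty)\to\R$ is bounded and measurable and its Laplace transform $g(z)=\int_0^\infty f(t)e^{-zt}\,dt$ (holomorphic on $\Re(z)>0$) extends holomorphically to a neighborhood of $\{\Re(z)\ge 0\}$, then $\int_0^\infty f(t)\,dt$ converges and equals $g(0)$. Setting $\Phi(s)=\sum_p(\ln p)p^{-s}$, one has $\Phi(s)=s\int_1^\infty\vartheta(x)x^{-s-1}\,dx$, and $\Phi(s)-1/(s-1)$ extends holomorphically to $\Re(s)\ge 1$ (it differs from $-\zeta'/\zeta$ by a series convergent for $\Re(s)>1/2$). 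Applying Newman to $f(t)=\vartheta(e^t)e^{-t}-1$ (bounded by the Chebyshev estimate) shows $\int_1^\infty(\vartheta(x)-x)x^{-2}\,dx$ converges; a short monotonicity argument on $\vartheta$ then rules out $\limsup\vartheta(x)/x>1$ and $\liminf\vartheta(x)/x<1$, giving $\vartheta(x)\sim x$ and hence the theorem. Since the cited reference~\cite{E49} instead uses elementary methods, I note the Erd\H{o}s--Selberg alternative: prove Selberg's symmetry formula $\sum_{p\le x}(\ln p)^2+\sum_{pq\le x}(\ln p)(\ln q)=2x\ln x+O(x)$ by M\"obius-inversion bookkeeping, and then run Erd\H{o}s's combinatorial argument bounding the oscillation of $\vartheta(x)/x$ down to the single value $1$; either route yields the stated $\eps$-form with $n_0$ depending only on $\eps$.
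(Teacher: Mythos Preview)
Your proof sketch is mathematically sound---both the Newman analytic route and the Erd\H{o}s--Selberg elementary outline are standard and correct paths to $\pi(n)\sim n/\ln n$. However, the paper does not prove this statement at all: it simply states the Prime Number Theorem as a known result with a citation to~\cite{E49} and uses it as a black box (specifically, to lower-bound the number of primes in the intervals $I_1$ and $I_2$ in the proof of Theorem~\ref{thm:threeTSum}). So there is nothing to compare against; your write-up is a self-contained proof where the paper has none.
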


\subsection{Function Inversion}
\begin{definition}\label{def:fi}
The Function Inversion problem is a problem to be solved in two phases by a pair of randomized algorithms $\A=(\A_0, \A_1)$. The algorithms receive oracle access to a function $f\colon[N]\to[N']$ (where $N' = \widetilde{O}(N)$), and both $\A_0$ and $\A_1$ can evaluate $f$ at any point $x\in[N]$ in time $\widetilde{O}(1)$.
\begin{description}
\item{\bf Preprocessing phase.} In the first phase, the preprocessing algorithm $\A_0$ preprocesses~$f$ into advice~$\P$ consisting of $S$ bits. 
\item{\bf Query phase.} In the next phase, the online algorithm $\A_1$ receives a query $y \in [N']$ and the advice string~$\P$. If $y \not\in\Im(f)$, then $\A_1$ outputs $\bot$, otherwise $\A_1$ outputs an $x\in f^{-1}(y)$. The running time of the algorithm $\A_1$ is~$T$.
\end{description}
We say that such an algorithm $\A$ for Function Inversion is an $(S, T)$-algorithm
if for every function~$f$, with probability at least $1-1/N$ over the randomness of the algorithms, the online algorithm~$\A_1$ correctly answers \emph{all} queries.

We say that such an algorithm $\A$ for Function Inversion is a \emph{weak} $(S, T)$-algorithm
if for every function $f$, for every query $y \in [N']$ such that $f^{-1}(y)$ is non-empty, the online algorithm $\A_1$ correctly answers $y$
with probability at least $1/2$ over the randomness of the algorithms.
\end{definition}

The classical rigorous algorithm for the Function Inversion problem due to Fiat and Naor~\cite{FN91} solves the problem in space $S$ and time $T$ as long as $S^3 T = \widetilde{\Omega}(N^3)$.
\begin{theorem}[{\cite{FN91}}]\label{thm:FI}
For every $0\leq\delta\leq 1$, there is an $(S,T)$-algorithm for Function Inversion with space $S=\widetilde{O}(N^{1-\delta/3})$ and $T=\widetilde{O}(N^{\delta})$.
\end{theorem}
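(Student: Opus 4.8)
The plan is to reconstruct the Fiat-Naor algorithm, which turns Hellman's heuristic table method into a rigorous worst-case scheme via two modifications: replacing the ``random reshuffling'' functions of Hellman's construction with $k$-wise independent hash functions (so that the coverage of the tables can be \emph{proven}, not merely estimated), and storing a random sample of preimage--image pairs to artificially shrink the collision probability $C(f):=\Pr_{x,x'\sim[N]}[f(x)=f(x')]$. I would first build a \emph{weak} $(S,T)$-algorithm and then amplify it to an $(S,T)$-algorithm by $\widetilde{O}(1)$ independent repetitions together with a union bound over the at most $N$ queries $y\in\Im(f)$.

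\paragraph{One table, for a function of small collision probability.} Let $s,t$ be parameters to be chosen later, and draw a $k$-wise independent $g\colon[N]\to[N]$ with $k=\poly(\log N)$; set $h:=g\circ f$, so that inverting $y$ under $f$ reduces to inverting $g(y)$ under $h$ plus one verification call to $f$. A Hellman table consists of $s$ uniformly random start points, each iterated $t$ times under $h$, storing only the (start, endpoint) pairs sorted by endpoint --- $\widetilde{O}(s)$ words. The key lemma is that, provided $s t^2\, C(f)\le 1$ up to logarithmic factors, with constant probability the $st$ chain points are mostly distinct and roughly uniformly spread in $[N]$, so a single table covers $\widetilde\Omega(st)$ distinct points. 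Its proof uses the $k$-wise independence of $g$ to bound the higher moments of the number of coincidences within and across chains, together with the standard accounting of how a chain ``merges'' into already-covered territory once it first hits a previously seen point.

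\paragraph{Many tables, and the parameters.} Take $m=\widetilde\Theta(N/(st))$ independent tables built with fresh hashes $g_1,\dots,g_m$. A second-moment (Paley-Zygmund-type) argument shows that their chains jointly cover a $(1-o(1))$-fraction of $[N]$ with constant probability, hence every invertible $y$ has a covered preimage with constant probability. The online algorithm, for each table $j$, walks up to $t$ iterations of $h_j$ starting from $g_j(y)$, performing one lookup per step in the sorted endpoint list; on a hit it recomputes that chain from the stored start point ($\le t$ further steps) to obtain a candidate, which it verifies by a single call to $f$. This costs time $\widetilde{O}(mt)=\widetilde{O}(N/s)$ and space $\widetilde{O}(ms)=\widetilde{O}(N/t)$. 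Setting $S=\widetilde\Theta(N/t)$, $T=\widetilde\Theta(N/s)$ and substituting back into $s t^2\, C(f)\le 1$ gives $T S^2=\widetilde{O}(N^3\, C(f))$. For an arbitrary $f$ one removes the dependence on $C(f)$ by additionally storing a set $R$ of $\widetilde\Theta(S)$ uniformly random domain points and their images; w.h.p. every $y$ with $|f^{-1}(y)|\ge\widetilde\Omega(N/S)$ has a preimage in $R$, and one then replaces $f$ by $f'$ that agrees with $f$ on the ``light'' points but reroutes every point whose image appears in $R$ to a fresh pseudorandom location (another hash carried in the advice), so chains never dwell on heavy images; inverting $f'$ plus a lookup in $R$ inverts $f$, and one checks $C(f')\le\widetilde{O}(1/S)$. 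Plugging $C(f')=\widetilde{O}(1/S)$ in gives $T S^2=\widetilde{O}(N^3/S)$, i.e. $T S^3=\widetilde{O}(N^3)$, which for $S=\widetilde{O}(N^{1-\delta/3})$ and $T=\widetilde{O}(N^{\delta})$ is exactly the required tradeoff. Finally, $\widetilde{O}(1)$ independent copies of the whole construction drive the per-query failure probability below $1/\poly(N)$, and a union bound over the $\le N$ queries in $\Im(f)$ upgrades the weak algorithm to a genuine $(S,T)$-algorithm, all logarithmic overheads being absorbed into the $\widetilde{O}(\cdot)$.

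\paragraph{Main obstacle.} The crux is the coverage analysis of the first two steps: showing \emph{rigorously} (where Hellman argued only heuristically) that $st$ chain points per table are sufficiently distinct, and that $m$ tables together miss only an $o(1)$-fraction of the image. This is precisely where the $k$-wise independence of the $g_j$'s, and the careful bookkeeping of chain merging once a chain enters already-covered territory, are indispensable; that is the part I would write out in full, whereas the parameter arithmetic, the collision-probability reduction, and the weak-to-strong amplification are essentially routine.
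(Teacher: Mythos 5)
The paper states \cref{thm:FI} as a citation to~\cite{FN91} and gives no proof of its own; the only related material is the informal sketch in the technical overview. Your reconstruction is a faithful summary of the actual Fiat--Naor argument---Hellman tables made rigorous via $k$-wise independent reshufflings, the ``heavy image'' sample of size $\widetilde{\Theta}(S)$ to artificially drive $C(f')$ down to $\widetilde{O}(1/S)$, and the parameter arithmetic $S=\widetilde{O}(N/t)$, $T=\widetilde{O}(N/s)$, $st^2C(f')\lesssim 1$ yielding $TS^3=\widetilde{O}(N^3)$---and it is consistent with that sketch, with the correct weak-to-strong amplification at the end mirroring the paper's \cref{lem:fworstCaseToAvCase}. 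One small imprecision worth flagging: you claim the $m=\widetilde{\Theta}(N/(st))$ tables jointly cover a $(1-o(1))$-fraction of the image; the actual Fiat--Naor per-query guarantee is only a constant success probability, which is all that a \emph{weak} $(S,T)$-algorithm needs, and the $(1-o(1))$ coverage is not needed (nor free) since the amplification step already drives the failure probability below $1/\poly(N)$. This does not affect the correctness of the overall plan.
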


We will also use another version of this algorithm for Function Inversion that performs better for $T\gg S$.
\begin{theorem}[{\cite{GGPS23}}]\label{thm:FIrecent}
For every $0\leq\delta\leq 1$, there is an $(S,T)$-algorithm for Function Inversion with space $S=\widetilde{O}(N^{1.5-\delta})$ and $T=\widetilde{O}(N^{\delta})$. The algorithm uses $\widetilde{O}(T) = \widetilde{O}(N^{\delta})$ bits of shared randomness.
\end{theorem}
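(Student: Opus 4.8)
The plan is to drive down the \emph{worst-case collision probability} of the function to be inverted by paying with the \emph{online time} budget (and shared randomness) rather than with advice bits. Recall from the discussion of Fiat--Naor above that applying their machinery to an efficiently computable $h\colon[N]\to[N'']$ with $N''=\widetilde O(N)$ and collision probability $C(h)=\Pr_{x,x'}[h(x)=h(x')]$ yields, for every point on the curve $S^2T_0=\widetilde O(N^3\,C(h))$, an $(S,T_0)$-inversion algorithm for $h$ (itself using only $\widetilde O(1)$ shared bits for $k$-wise independent hashing). Our target $ST=\widetilde O(N^{3/2})$ is exactly what this gives once the relevant collision probability is down to $\widetilde O(1/T)$. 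So the whole proof amounts to replacing the given oracle $f$ by a related, still efficiently computable, \emph{random} function $f'$ with $C(f')=\widetilde O(1/T)$, handling the ``heavy'' part of $f$ directly in the query phase, and checking that inverting $f'$ suffices for inverting $f$ on the remaining queries.

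First I would fix a threshold $\tau=\widetilde\Theta(T)$ and call $y\in[N']$ \emph{heavy} if $|f^{-1}(y)|\ge N/\tau$. Using $\widetilde\Theta(\tau)$ bits of shared randomness, both algorithms sample the \emph{same} set $R\subseteq[N]$ of $|R|=\widetilde\Theta(\tau)$ uniformly random points; since a fixed heavy $y$ is missed by $R$ with probability at most $(1-1/\tau)^{|R|}\le 1/\poly(N)$, a union bound over the at most $N'$ heavy images shows that, with probability $1-1/\poly(N)$, every heavy $y$ has some $r\in R$ with $f(r)=y$. On a query $y$ the online algorithm first recomputes the $\widetilde\Theta(\tau)$ pairs $(r,f(r))$, builds a hash table on them, and returns the corresponding $r$ whenever $y$ occurs among the $f(r)$; this costs $\widetilde O(\tau)=\widetilde O(T)$ time, uses \emph{no} advice, and (whp) answers every heavy query correctly.

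Next I would define the de-heavied function. Put $B=\{x\in[N] : f(x)\in f(R)\}$, let $g\colon[N]\to[N]$ be a $\Theta(\log N)$-wise independent function drawn from the shared randomness ($\widetilde O(1)$ extra bits), and set
\[
f'(x)=\begin{cases}
f(x), & x\notin B,\\
N'+g(x), & x\in B,
\end{cases}
\]
so that $f'\colon[N]\to[N'+N]$ is efficiently computable given $R$ and the seed (one evaluation of $f$ and one hash-table lookup per point, after the $\widetilde O(\tau)$ computation of $f(R)$), has range of size $\widetilde O(N)$, and sends its ``$g$-part'' into the block $\{N',\dots,N'+N-1\}$ disjoint from $[N']$. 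Two things need to be checked. For the collision probability: every heavy image of $f$ lies in $f(R)$ (whp) and hence has all of its preimages inside $B$, so it contributes nothing to the part of $C(f')$ supported on $[N']$; each surviving (light) image has multiplicity $<N/\tau$, so these together contribute at most $1/\tau$; and the $g$-part contributes $\widetilde O(1/N)$ in expectation, which a standard higher-moment bound using $\Theta(\log N)$-wise independence upgrades to $\widetilde O(1/N)\le\widetilde O(1/\tau)$ with probability $1-1/\poly(N)$. Hence $C(f')=\widetilde O(1/\tau)$ whp. For faithfulness: if a query $y\in[N']$ is \emph{not} caught in the first phase then $y$ is not heavy (whp) and $y\notin f(R)$, so $f'^{-1}(y)=f^{-1}(y)$; in particular $f'^{-1}(y)=\emptyset\iff f^{-1}(y)=\emptyset$, and any $x$ with $f'(x)=y\in[N']$ automatically satisfies $x\notin B$ and $f(x)=y$. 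The disjoint-range tag on $g$ is precisely what prevents spurious preimages coming out of the $g$-part.

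Finally I would assemble the pieces: apply the collision-probability version of Fiat--Naor to $f'$ to obtain an $(S,T_0)$-inversion algorithm for $f'$ with $S^2T_0=\widetilde O(N^3\,C(f'))=\widetilde O(N^3/\tau)$, whose only advice is the resulting $S$-bit data structure (the set $R$ and its images are never stored; they are recomputed online from the shared randomness). Choosing $\tau=\widetilde\Theta(T)$ balances the first-phase cost $\widetilde O(\tau)$ against $T_0=\widetilde\Theta(T)$, giving query time $\widetilde O(T)$ and the constraint $S^2T^2=\widetilde O(N^3)$, i.e.\ $ST=\widetilde O(N^{3/2})$; writing $S=\widetilde O(N^{1.5-\delta})$ then forces $T=\widetilde O(N^{\delta})$, the online space is $\widetilde O(S+T)$, and the shared randomness is $\widetilde O(\tau)=\widetilde O(T)$ bits, as claimed. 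A last union bound over the failure of the heavy-covering, of the $C(f')$ estimate, and of Fiat--Naor on the (now fixed) function $f'$ keeps the total error below $1/N$. I expect the main obstacle to be exactly this middle step: producing a \emph{single} random modification $f'$ that is simultaneously (a) efficiently evaluable by \emph{both} phases from only $\widetilde O(T)$ shared bits, (b) of collision probability $\widetilde O(1/T)$ with high probability, and (c) a faithful proxy for inverting $f$ on exactly the queries not already answered directly — with the disjoint-range trick handling (c), the $\Theta(\log N)$-wise independence of $g$ handling (b), and keeping $R$ out of the advice being what lets the \emph{online} budget, rather than $S$, pay for the collision-probability reduction.
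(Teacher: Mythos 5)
Your proposal is correct and matches the approach of \cite{GGPS23} as sketched in the paper's technical overview: use $\widetilde{O}(T)$ shared random bits to let both phases sample the same size-$\widetilde{\Theta}(T)$ set $R$, handle heavy images directly online by recomputing $f(R)$ (costing online time and online space but zero advice), redirect $R$-covered preimages to a disjoint range via a fresh $O(\log N)$-wise independent hash so the residual function $f'$ has collision probability $\widetilde{O}(1/T)$, and then run the collision-probability version of Fiat--Naor on $f'$ to get $S^2T=\widetilde{O}(N^3/T)$. The parameter balancing $\tau=\widetilde{\Theta}(T)$, the union bound over heavy images, and the accounting of shared-randomness length all match; this is the intended argument.
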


We stress that the online algorithms of \cref{thm:FI,thm:FIrecent} run in time $T$ in the RAM model, but the algorithm of \cref{thm:FIrecent} assumes shared randomness. \cite[Section~6]{GGPS23} proves that \cref{thm:FIrecent} can be implemented without shared randomness at the expense of having computationally unbounded preprocessing and a \emph{non-uniform} online algorithm (as opposed to a RAM online algorithm).  In \cref{thm:sub-functions}, we show that in \emph{our} application, the need for shared randomness can be completely eliminated with no impact on the parameters or the running time of the preprocessing and online algorithms by simply including the shared random string as part of the preprocessed advice. 

We will use the following lemma, which applies standard techniques to amplify the success probability of a weak function inversion algorithm.

\begin{lemma}\label{lem:fworstCaseToAvCase}
Let $f: [N] \to [N']$ be a function such that $N' = \widetilde{O}(N)$. Let $\A=(\A_0, \A_1)$ be a weak $(S, T)$-algorithm for inverting $f$. Then there exists an
$(S', T')$-algorithm for inverting $f$ with $S'=\widetilde{O}(S)$ and $T'=\widetilde{O}(T)$.
\end{lemma}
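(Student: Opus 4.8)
The plan is the standard two-step amplification: first boost the per-query success probability from $1/2$ to $1 - 1/\poly(N)$ by independent repetition, then take a union bound over all $N'$ possible queries to get a single algorithm that answers \emph{every} query correctly with high probability. Concretely, I would run $\Theta(\log N)$ independent copies of the weak algorithm $\A$ in parallel. In the preprocessing phase, $\A_0'$ runs $\A_0$ with $r = \Theta(\log N)$ fresh independent random strings, producing advice strings $\P_1, \dots, \P_r$, and sets $\P' = (\P_1, \dots, \P_r)$; this has size $S' = r \cdot S = \widetilde O(S)$. In the query phase, on input $y$, the online algorithm $\A_1'$ runs $\A_1$ on $y$ with each $\P_i$ in turn; each invocation either returns a candidate $x_i$ or $\bot$. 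Since $f$ is evaluable in time $\widetilde O(1)$ (by Definition~\ref{def:fi}), $\A_1'$ can \emph{verify} each candidate by checking whether $f(x_i) = y$, and it outputs the first $x_i$ that passes this check, or $\bot$ if none does. The running time is $T' = r \cdot (T + \widetilde O(1)) = \widetilde O(T)$.

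The correctness argument splits on whether $y \in \Im(f)$. If $f^{-1}(y) = \emptyset$, then no candidate can ever pass the verification step $f(x_i) = y$, so $\A_1'$ always outputs $\bot$ — this case is handled deterministically, with no error. If $f^{-1}(y) \neq \emptyset$, then by the weak guarantee each copy independently returns a correct preimage with probability at least $1/2$ (and a correct preimage always passes verification), so the probability that \emph{all} $r$ copies fail to return a valid preimage is at most $2^{-r}$. Choosing $r = c \log N$ for a suitable constant $c$ (say $c = 3$) makes this at most $N^{-3}$, so by a union bound over the at most $N' = \widetilde O(N)$ queries with non-empty preimage set, the probability that $\A_1'$ errs on \emph{some} query is at most $N' \cdot N^{-3} \le 1/N$ for $N$ large enough (absorbing the polylog factor in $N'$ into the constant $c$). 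Hence $\A' = (\A_0', \A_1')$ is an $(S', T')$-algorithm in the sense of Definition~\ref{def:fi}.

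The only subtlety — and the one place the argument could go wrong if one is careless — is that the weak guarantee is only a \emph{per-query} statement with no control over the joint behavior across different $y$'s, so one genuinely needs the union bound and therefore the amplification to below $1/N'$ rather than merely below $1/2$. The verification step is what makes this clean: it converts the weak algorithm's one-sided guarantee into something whose correctness we can test, so that repetition amplifies the failure probability multiplicatively and a false positive is impossible. I expect no real obstacle here; the lemma is a routine packaging step, and the main thing to be explicit about is that $\widetilde O(\cdot)$ hides the $\Theta(\log N)$ repetition factor and that the $N' = \widetilde O(N)$ bound keeps the union bound from costing more than the amplification buys.
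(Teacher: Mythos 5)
Your proof is correct and follows essentially the same approach as the paper's: run $\Theta(\log N)$ independent copies of the weak algorithm, verify each candidate by evaluating $f$, and union-bound over the $\widetilde{O}(N)$ possible queries. The paper uses $\ell = \lceil \log(NN') \rceil$ repetitions to get $N' \cdot 2^{-\ell} \le 1/N$ exactly, while you use $r = 3\log N$ and invoke ``$N$ large enough,'' but the argument is the same.
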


\begin{proof}
The algorithm $\A' = (\A'_0,\A'_1)$ will create $\ell= \lceil \log(NN') \rceil$ independent copies $\A^0,\ldots,\A^{\ell-1}$ of the assumed weak algorithm, where each copy uses $S$ bits of preprocessing and answers its queries in time~$T$.
\begin{description}
\item {\bf Preprocessing phase:} The advice string produced by $\A'_0$ consists of the $\ell$ advice strings $\P_0,\ldots,\P_{\ell-1}$ produced by $\A^0,\ldots,\A^{\ell-1}$. Clearly, the space complexity of the algorithm is $S'= \ell \cdot S = \widetilde{O}(S)$.
\item {\bf Query phase:} Given a query $y \in [N']$, $\A'_1$ collects the answers $(x^{0},\ldots,x^{\ell-1})$ of the $\ell$ online algorithms $\A^1,\ldots,\A^\ell$. (Some of these answers might be $\bot$, in which case it ignores them.) If for some $i \in [\ell]$, $f(x^{i}) = y$, then $\A'_1$ outputs $x^{i}$, otherwise it outputs $\bot$. The query time of $\A'_1$ is $T'=\widetilde{O}(\ell\cdot T)=\widetilde{O}(T)$.
\item {\bf Analysis:}
It remains to prove that $\A'_1$ answers \emph{all queries} $y \in [N']$ correctly with probability at least $1-1/N$. Since $\A'_1$ verifies that $f(x^{i}) = y$, it never outputs a false positive solution. In~particular, if $f^{-1}(y)$ is empty, then the algorithm always outputs the correct answer $\bot$. Now assume that  $f^{-1}(y)$ is non-empty. Then, the probability that all $\ell$ instances $\A^1,\ldots,\A^\ell$ give wrong answers on~$y$ is at most $1/2^\ell$. Taking a union bound over all $y\in [N']$, the probability that at least one query is not answered correctly is at most 
\[
N' \cdot 2^{-\ell} \leq N' / (NN')= 1/N \;.\qedhere
\]
\end{description}
\end{proof}

\subsection{Data Structure Problems}
\begin{definition}\label{def:ksum}
For a constant integer $k\geq 3$, the \kSUMInd problem is a problem to be solved in two phases by a pair of randomized algorithms $\A=(\A_0, \A_1)$. 
\begin{description}
\item{\bf Preprocessing phase.} In the first phase, the preprocessing algorithm $\A_0$ receives a list of $n$ integers $A=(a_0, \ldots, a_{n-1}) \in \Z_{\geq0}^n$, and preprocesses them into advice~$\P$ consisting of $S$ bits. %

\item{\bf Query phase.} In the next phase, the online algorithm $\A_1$ receives a query $b \in \Z_{\geq0}$ and the advice string~$\P$. If $b \not\in (k-1)A$, then $\A_1$ outputs $\bot$, otherwise $\A_1$ outputs a tuple $(i_1,\ldots,i_{k-1}) \in [n]^{k-1}$ such that  $a_{i_1}+\ldots+a_{i_{k-1}}=b$. The running time of the algorithm $\A_1$ is $T$.
\end{description}
We say that such an algorithm $\A$ for \kSUMInd is an $(S, T)$-algorithm
if for every input $A$, with probability at least $1-1/n$ over the randomness of the algorithms, the online algorithm $\A_1$ correctly answers all queries.
\end{definition}

A few remarks about the definition of \kSUMInd are in order.
\begin{remark}\label{rem:ksumdef}
\begin{enumerate}
\item Since in this work we are concerned with \emph{upper bounds} on the complexity of \kSUMInd, we intentionally choose the weaker computational model for \cref{def:ksum} (which only makes our results stronger). Another standard computational model for this problem is the cell-probe model, where (i) the preprocessing and online algorithms are computationally unbounded, and (ii) $T$ only bounds the number of bits of the advice $\P$ read by the online algorithm $\A_1$. 
\item All randomized algorithms can be implemented as deterministic algorithms in the cell-probe model by fixing the ``best'' randomness in the computationally unbounded preprocessing phase. We remark that the algorithms we present in this work can be derandomized much more efficiently by verifying that all elements of $(k-1)A$ appear in some chain of the Function Inversion subroutine.
\item While the running time of the preprocessing algorithm is not bounded in \cref{def:ksum}, we~note that the preprocessing algorithms presented in this work are efficient and run in time $\widetilde{O}(n^{k-1})$.
\item The success probability of the \kSUMInd algorithm is defined to be at least $1-1/n$, but it can be easily amplified by standard techniques (as in \cref{lem:fworstCaseToAvCase}). 
\item One can assume without loss of generality that the input list of integers $(a_1,\ldots,a_n)\in[M]^n$ satisfies $M=\widetilde{O}(n^{k-1})$. This can be achieved by reducing the inputs $(a_1,\ldots,a_n)$ modulo several primes $p_i=\widetilde{O}(n^{k-1})$ (see, e.g., \cite[Theorem~7]{GGHPV20}).
\end{enumerate}
\end{remark}

Next, we define other data structure problems we are considering in this paper. In order to define the \JI problem, we first need to define the histogram of a string $S\in\Sigma^n$ over an alphabet $\Sigma$. The histogram of~$S$ is a vector $h\in\Z_{\geq0}^{|\Sigma|}$, where the $i$th coordinate of $h$ is the number of occurrences of the $i$th character of the alphabet~$\Sigma$ in the string~$S$.
\begin{itemize}
\item For $k\geq 3$, the \kXORInd problem is a variant of \kSUMInd where the addition is replaced by XOR. The \kXORInd problem takes as input a list of~$n$ vectors $A=(a_0, \ldots, a_{n-1}) \in (\{0,1\}^\ell)^n$. For a query $b \in \{0,1\}^\ell$, the task is to output a tuple $(i_1,\ldots,i_{k-1}) \in [n]^{k-1}$ such that 
$a_{i_1}\oplus\ldots\oplus a_{i_{k-1}}=b$ if such a tuple exists, and to output $\bot$ otherwise.
 
\item For $n\leq m$, the $\ThreeSUMInd(n,m)$ problem is a variant of \ThreeSUMInd where the two input lists may have different lengths. The $\ThreeSUMInd(n,m)$ problem takes as input a list of $n$ non-negative integers $A=(a_0, \ldots, a_{n-1}) \in \Z_{\geq0}^{n}$ and a list of $m$ non-negative integers $B=(b_0, \ldots, b_{m-1}) \in \Z_{\geq0}^{m}$. For a query $b \in \Z_{\geq0}$, the task is to output a pair $(i,j) \in [n] \times [m]$ such that $a_i + b_j = b$ if such a pair exists, and to output $\bot$ otherwise. For $n=m$, $\ThreeSUMInd(n,m)$ is simply the \ThreeSUMInd problem.
\item The \JI problem takes as input a string~$S$ of length~$n$ over an alphabet $\Sigma$. For a query histogram $h\in\Z_{\geq0}^{|\Sigma|}$, the task is to output 1 if there exists a substring of~$S$ with histogram~$h$, and to output~0 otherwise.
\end{itemize}

We will use the following efficient reduction from \JI to \ThreeSUMInd shown by~\cite{BGLPRS24} (the reduction was also implicitly used in \cite{CL15}).
\begin{theorem}[{\cite[Corollary~2]{BGLPRS24}}]\label{thm:JI}
Assume there is an $(S,T)$-algorithm for \ThreeSUMInd. Then there is an $(S', T')$-algorithm for \JI over alphabets of constant size $|\Sigma|=O(1)$, where $S'=\widetilde{O}(S)$ and $T'=\widetilde{O}(T)$.
\end{theorem}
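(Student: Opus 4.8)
The plan is to give a black‑box reduction turning an $(S,T)$‑algorithm for \ThreeSUMInd into an $(S',T')$‑algorithm for \JI over a constant‑size alphabet with $S'=\widetilde O(S)$ and $T'=\widetilde O(T)$; this is the prefix‑histogram encoding used in \cite{CL15,BGLPRS24} (write $w$ for the input string, to avoid clashing with the space bound $S$). The starting point is that a substring $w[i..j]$ has histogram $H(j)-H(i)$, where $H(k)\in\Z_{\geq0}^{|\Sigma|}$ is the histogram of the length‑$k$ prefix of $w$; so \JI asks, on query $h$, whether some pair $i<j$ satisfies $H(j)-H(i)=h$. Since every coordinate of every $H(k)$ lies in $\{0,\dots,n\}$, I would fix a base $B=2n+1$, a sufficiently large power $C=B^{|\Sigma|+1}$, and encode a histogram $v$ with coordinates in $\{0,\dots,n\}$ as the integer $\mathrm{enc}(v)=\sum_{c}v_cB^{c}$. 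The one claim to verify is: for vectors $u,v,w'$ with all coordinates in $\{0,\dots,n\}$, $\mathrm{enc}(u)-\mathrm{enc}(v)=\mathrm{enc}(w')$ iff $u-v=w'$ coordinatewise. The ``if'' direction is linearity; the ``only if'' direction follows from uniqueness of balanced base‑$B$ representations, since each $u_c-v_c$ and each $w'_c$ has absolute value strictly below $B/2$, so no carries occur.

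Next I would fold the search for a pair $i<j$ with $e_j-e_i=\mathrm{enc}(h)$, where $e_k:=\mathrm{enc}(H(k))$, into one \ThreeSUMInd instance. Set $\hat a_k:=e_k+2C$ and $\hat b_k:=C-e_k$ for $k\in\{0,\dots,n\}$ — all positive integers of magnitude $O(C)$ — and let $L$ be the list $(\hat a_0,\dots,\hat a_n,\hat b_0,\dots,\hat b_n)$ of length $2(n+1)$. Because $0\le e_k<C$, the pairwise sums of $L$ split into three disjoint intervals: sums of two $\hat a$'s lie in $[4C,6C)$, sums of two $\hat b$'s lie in $(0,2C]$, and mixed sums $\hat a_k+\hat b_{k'}=3C+(e_k-e_{k'})$ lie in $(2C,4C)$. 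Given a \JI query $h$, the online algorithm first outputs $0$ if some $h_c>n$ or $\sum_c h_c>n$ (no witness is possible), in $O(|\Sigma|)$ time; otherwise it issues the single \ThreeSUMInd query $y:=3C+\mathrm{enc}(h)$, which lands in the mixed interval. Any pair returned by \ThreeSUMInd for this $y$ is therefore one $\hat a_k$ and one $\hat b_{k'}$, giving $e_k-e_{k'}=\mathrm{enc}(h)$, hence $H(k)-H(k')=h$ by the claim; summing coordinates gives $k-k'=\sum_c h_c\ge1$, so $k>k'$ and $w[k'..k]$ is a witness, and the algorithm outputs $1$. Conversely any witness substring produces such a pair, so \ThreeSUMInd answers non‑$\bot$ on $y$ exactly when a witness exists, and the \JI algorithm outputs $0$ otherwise.

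For the parameters, the \ThreeSUMInd instance has $\Theta(n)$ integers of magnitude $O\big((2n+1)^{|\Sigma|+1}\big)=n^{O(1)}$; this is exactly where the hypothesis $|\Sigma|=O(1)$ is used, keeping the integers polynomial so that the $\widetilde O(\cdot)$ of the assumed \ThreeSUMInd algorithm hides only $\poly(\log n)$ factors. Preprocessing computes the prefix histograms and $L$ in $\widetilde O(n)$ time and then runs the \ThreeSUMInd preprocessing, so $S'=\widetilde O(S+n)=\widetilde O(S)$ (any nontrivial \ThreeSUMInd data structure already uses $\widetilde{\Omega}(n)$ space); each \JI query costs one \ThreeSUMInd query plus $O(|\Sigma|)$ arithmetic, so $T'=\widetilde O(T)$; and the failure probability is inherited directly from \ThreeSUMInd. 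The only real obstacle is the bookkeeping with constants — making the encoding injective on pairwise differences in the absence of carries, making the three sum‑intervals genuinely disjoint, and keeping every quantity nonnegative and polynomially bounded — none of which is deep, but each requires care in the exact choice of $B$ and $C$.
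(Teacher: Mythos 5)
The paper does not prove \cref{thm:JI}; it imports it verbatim as Corollary~2 of \cite{BGLPRS24} and uses it as a black box, so there is no in-paper argument to compare against. Your self-contained reduction---prefix histograms $H(0),\dots,H(n)$, base-$B$ encoding with $B=2n+1$ into integers below $C=B^{|\Sigma|+1}$, and folding the two families $\hat a_k=e_k+2C$, $\hat b_k=C-e_k$ into a single \ThreeSUMInd array of length $2(n+1)$ so that $\hat a+\hat a$, $\hat b+\hat b$, and $\hat a+\hat b$ sums land in three disjoint intervals---is the standard one underlying \cite{CL15,BGLPRS24} and is correct. The digit-uniqueness step goes through because every digit $u_c-v_c-w'_c$ has absolute value at most $2n<B$ (equivalently, both sides are balanced base-$B$ representations with digits in $(-B/2,B/2)$); the interval-separation constants are consistent; you correctly locate where $|\Sigma|=O(1)$ is used (to keep $C=n^{O(1)}$ so the $\widetilde O(\cdot)$ hides only polylogarithmic factors); and the accounting $S'=\widetilde O(S+n)=\widetilde O(S)$, $T'=\widetilde O(T)$ is right. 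The one small loose end is the all-zero query $h=0$: your conclusion $k>k'$ relies on $\sum_c h_c\ge 1$, so for $h=0$ the \ThreeSUMInd call can return a pair with $k=k'$ (the empty substring). Whether the answer should be $1$ or $0$ depends on whether \JI admits the empty substring; either way this single query should be answered by a hard-coded bit rather than routed through \ThreeSUMInd.
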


\section{Function Inversion with Sub-Functions}
In \cref{thm:sub-functions}, we describe the structure of $f$ that leads to an improved inversion algorithm for some parameters. In \cref{sec:3sum}, we identify this structure for \ThreeSUMInd and apply \cref{thm:sub-functions}. We present our main result in this modular way because we believe that \cref{thm:sub-functions} may find additional applications for other data structure problems.

Specifically, we present an efficient function inversion algorithm for a class of structured functions~$f\colon[N] \to [N']$. We assume the existence of two efficient mappings, $\MAP_1\colon[N'] \to [D]$ and $\MAP_2\colon[N'] \to [L']$, along with $D$ efficiently computable sub-functions $f_d\colon[L] \to [L']$, such that the following holds with high probability: to invert $f$ at a point $y$, it suffices to invert the function $f_{\MAP_1(y)}$ at the point $\MAP_2(y)$. In this case, the problem of inverting~$f$ reduces to inverting \emph{one} of the $D$ functions $f_d$, each of which has a smaller domain. While this reduction alone does not improve upon the classical Fiat–Naor algorithm, combining it with the space-efficient inversion algorithm from~\cref{thm:FIrecent} allows us to improve the known bounds in this special setting.
\begin{theorem}\label{thm:sub-functions}
Let $f:[N] \to [N']$ be a function such that $N' = \widetilde{O}(N)$ and let $\hat{D} = \widetilde{O}(N),\hat{L} = \widetilde{O}(N),\hat{S} = \widetilde{O}(N)$ be integer parameters. Assume that there is a randomized algorithm $\B$ that takes $f$ as input and outputs $(D,L,L',\AUX)$, where $D,L,L' \in \Z_{\geq0}$ satisfy
$D = \widetilde{O}(\hat{D}),L = \widetilde{O}(\hat{L}),L' = \widetilde{O}(\hat{L})$,
and $\AUX$ is an auxiliary string of length $\widetilde{O}(\hat{S})$ bits, such that the following hold:
\begin{enumerate}
  \item For every $d \in [D]$, there is a deterministic (sub) function $f_d: [L] \to [L']$.
  \item There are deterministic query mapping functions $\MAP_1:[N'] \to [D]$, $\MAP_2:[N'] \to [L']$, 
and an output translation function $\TR:[N'] \times [L] \to [N]$.
  \item \label{constraint:time} All functions $f_d,\MAP_1,\MAP_2,\TR$, given $D,L, L'$, and access to $\AUX$, can be evaluated at any point in time $\widetilde{O}(1)$.  
  \item \label{constraint:1} 
  For every $y \in [N']$ such that $f^{-1}(y)$ is non-empty, 
  with probability at least $5/6$ (over the randomness of $\B$),
  there is $x' \in [L]$ such that 
  $f_{\MAP_1(y)}(x') = \MAP_2(y)$.
  \item 
  \label{constraint:2}
  For every $y \in [N']$ such that $f^{-1}(y)$ is non-empty, 
  with probability at least $5/6$ (over the randomness of $\B$),
  every $x' \in [L]$ such that $f_{\MAP_1(y)}(x') = \MAP_2(y)$ satisfies  
  $\TR(y, x' ) \in f^{-1}(y)$.
\end{enumerate}
Then, for every $0\leq \delta \leq 1$, there is an $(S, T)$-algorithm for inverting $f$ with space $S=\widetilde{O}( \hat{L}^{1.5-\delta} \cdot \hat{D} + \hat{S} + \hat{L}^{\delta})$ and query time $T=\widetilde{O}(\hat{L}^{\delta})$.

In particular, if $\hat{L} = \widetilde{O}(N/\hat{D})$, then 
for every $0\leq \delta \leq 1$, there is an $(S, T)$-algorithm for inverting $f$ with 
$S=\widetilde{O}(N^{1.5-\delta} \cdot \hat{D}^{\delta-0.5} + \hat{S} + N^{\delta}/\hat{D}^{\delta})$ and $T=\widetilde{O}(N^{\delta}/\hat{D}^{\delta})$. 
\end{theorem}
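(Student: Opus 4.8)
The plan is to build the advice string for $f$ by running the space-efficient Function Inversion algorithm of \cref{thm:FIrecent} separately on each of the $D$ sub-functions $f_d\colon[L]\to[L']$, concatenating those $D$ advice strings together with $\AUX$, and then answering a query $y$ by reducing it (via the promised mappings) to a single inversion call on $f_{\MAP_1(y)}$. Concretely: first I would invoke \cref{lem:fworstCaseToAvCase} in reverse spirit — I only need a \emph{weak} $(S,T)$-algorithm for $f$, since the lemma upgrades it to a true $(S,T)$-algorithm at a poly-logarithmic cost; this lets me work with constant success probability per query and absorb the $5/6$ probabilities from Constraints~\ref{constraint:1} and~\ref{constraint:2}. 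For the weak algorithm, $\A_0$ runs $\B$ to obtain $(D,L,L',\AUX)$, and for each $d\in[D]$ it applies the $(S_d,T_d)$-algorithm of \cref{thm:FIrecent} to $f_d$ with parameter $\delta$, using the $\widetilde{O}(1)$-time evaluation of $f_d$ guaranteed by Constraint~\ref{constraint:time}. Since $L,L'=\widetilde{O}(\hat L)$, \cref{thm:FIrecent} gives $S_d=\widetilde{O}(\hat L^{1.5-\delta})$ and $T_d=\widetilde{O}(\hat L^{\delta})$. The total advice is the $D$ strings plus $\AUX$, for $S=\widetilde{O}(\hat D\cdot\hat L^{1.5-\delta}+\hat S+\hat L^{\delta})$; the extra $\widetilde{O}(\hat L^{\delta})$ term is the cost of storing the shared randomness, addressed below.

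The query algorithm, on input $y\in[N']$, computes $d=\MAP_1(y)$ and $y'=\MAP_2(y)$ in time $\widetilde{O}(1)$, runs the online algorithm of \cref{thm:FIrecent} for $f_d$ on challenge $y'$ to get some $x'\in[L]\cup\{\bot\}$ in time $T_d=\widetilde{O}(\hat L^{\delta})$, and if $x'\neq\bot$ outputs $\TR(y,x')$, having first verified $f(\TR(y,x'))=y$ (which it can do since it has oracle access to $f$); otherwise it outputs $\bot$. Correctness for $y\notin\Im(f)$ is immediate because of the verification step. For $y\in\Im(f)$: with probability $\geq5/6$ Constraint~\ref{constraint:1} ensures $f_d^{-1}(y')$ is non-empty, so by the guarantee of \cref{thm:FIrecent} (as a weak algorithm, with probability $\geq1/2$) the inner call returns a genuine preimage $x'$; and with probability $\geq5/6$ Constraint~\ref{constraint:2} ensures that preimage translates to an element of $f^{-1}(y)$. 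A union bound over these three events leaves success probability bounded below by a positive constant, which \cref{lem:fworstCaseToAvCase} then amplifies to $1-1/N$ at the cost of a $\widetilde{O}(\log N)$ factor folded into the $\widetilde{O}$'s.

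The one genuine wrinkle is the shared randomness required by \cref{thm:FIrecent}: its online algorithm needs $\widetilde{O}(T_d)=\widetilde{O}(\hat L^{\delta})$ random bits shared with preprocessing. I would eliminate this exactly as the paper's overview promises — $\A_0$ samples this random string itself and appends it verbatim to the advice. This adds only $\widetilde{O}(\hat L^{\delta})$ bits to $S$ (the $\hat L^{\delta}$ summand already written), and since $\A_1$ reads it as part of $\P$, no out-of-band coordination is needed; the online running time is unaffected. (One should note that $D$ independent instances of \cref{thm:FIrecent} can reuse the \emph{same} shared random string, or use $D$ fresh ones — either way the randomness budget is $\widetilde{O}(D\cdot\hat L^{\delta})$ at worst, still dominated by $\hat D\cdot\hat L^{1.5-\delta}$ whenever $\delta\le 1$, so it is harmless to absorb.) Finally, the "in particular" clause is pure bookkeeping: substituting $\hat L=\widetilde{O}(N/\hat D)$ into $S=\widetilde{O}(\hat L^{1.5-\delta}\hat D+\hat S+\hat L^{\delta})$ gives $\hat L^{1.5-\delta}\hat D=\widetilde{O}(N^{1.5-\delta}\hat D^{\delta-0.5})$ and $\hat L^{\delta}=\widetilde{O}(N^{\delta}/\hat D^{\delta})$, matching the claimed expression, and likewise $T=\widetilde{O}(\hat L^{\delta})=\widetilde{O}(N^{\delta}/\hat D^{\delta})$.

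The main obstacle I anticipate is not any single step but making the probabilistic bookkeeping clean: the $5/6$ guarantees in Constraints~\ref{constraint:1}--\ref{constraint:2} are "for every $y$ with probability over $\B$'s randomness," whereas the weak-to-strong amplification in \cref{lem:fworstCaseToAvCase} needs per-query success probability $\geq1/2$ over \emph{all} randomness; one must be careful that the randomness of $\B$, the randomness of the inner \cref{thm:FIrecent} instances, and the $\ell$ independent repetitions are composed in the right order so that the final union bound over $y\in[N']$ goes through. The verification step ($f(\TR(y,x'))\stackrel{?}{=}y$) is what makes this safe, since it guarantees no false positives regardless of which bad event occurs, so only the "non-empty $f^{-1}(y)$" case needs the probabilistic argument.
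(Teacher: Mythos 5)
Your proposal follows the same approach as the paper's proof: reduce to a weak algorithm via \cref{lem:fworstCaseToAvCase}, run \cref{thm:FIrecent} independently on each sub-function $f_d$, route a query $y$ through $(\MAP_1,\MAP_2,\TR)$ with a final verification $f(\TR(y,x'))\stackrel{?}{=}y$, and union-bound the three failure events coming from Constraints~\ref{constraint:1}, \ref{constraint:2}, and the inner inverter. The bookkeeping for $S$ and $T$ and the ``in particular'' substitution also match.

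The one genuine gap is in the parenthetical on shared randomness. You claim that storing $D$ fresh random strings (total $\widetilde{O}(D\cdot\hat L^{\delta})$ bits) is ``still dominated by $\hat D\cdot\hat L^{1.5-\delta}$ whenever $\delta\le1$.'' That comparison requires $\hat L^{\delta}\le\hat L^{1.5-\delta}$, i.e.\ $\delta\le 3/4$; for $3/4<\delta\le1$ the term $\hat D\cdot\hat L^{\delta}$ is \emph{larger} than $\hat D\cdot\hat L^{1.5-\delta}$ and also larger than the allotted $\hat L^{\delta}$ summand, so this fallback does not give the claimed $S$. Moreover, $3/4<\delta<1$ is exactly the regime where \cref{thm:threeSum} improves on Fiat--Naor, so the error is not in a vacuous range. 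Your main line (store \emph{one} shared string of length $\widetilde{O}(\hat L^{\delta})$ and reuse it for all $D$ instances) is the correct move and is what the paper does, but you assert it rather than justify it. The paper's justification runs a second amplification of each $f_d$-inverter (still via \cref{lem:fworstCaseToAvCase}, at poly-logarithmic cost so the string stays $\widetilde{O}(\hat L^{\delta})$) to push per-sub-function failure down to $\widetilde{O}(1/(N\hat D))$, then union-bounds over the $\widetilde{O}(\hat D)$ sub-functions so a single string works for all of them except with probability $1/(2N)$. Alternatively, one can observe that in the weak-algorithm analysis only the single sub-function $f_{\MAP_1(y)}$ is invoked for a fixed query $y$, and \cref{thm:FIrecent} applied to that one sub-function already gives the needed $1-1/L\ge 5/6$ bound over the choice of the shared string, regardless of how the same string behaves on the other $D-1$ sub-functions. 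Either argument should replace the flawed ``$D$ fresh strings'' remark.
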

We stress that we assume that the description of $f$ and   
the output $(D,L,L',\AUX)$ of the algorithm $\B$ uniquely define all functions $f_d,\MAP_1,\MAP_2$ and $\TR$.
Depending on the randomness of $\B$, these functions may have different domains and ranges, but they always satisfy $D = \widetilde{O}(\hat{D}),L = \widetilde{O}(\hat{L}),L' = \widetilde{O}(\hat{L})$. We further clarify that all $\widetilde{O}(\cdot)$ and $\widetilde{\Omega}(\cdot)$ notations used in the theorem suppress poly-logarithmic factors in $N$.

\begin{proof}[Proof of Theorem~\ref{thm:sub-functions}]
By~\cref{lem:fworstCaseToAvCase}, it is sufficient to 
describe a weak $(S,T)$-algorithm $\A = (\A_0,\A_1)$ for $f$. The algorithm will use the algorithm of~\cref{thm:FIrecent}, which requires shared randomness. We will first apply this algorithm naively, and then show how to remove the shared randomness with minimal cost.
\begin{description}
\item {\bf Preprocessing phase:}
$\A_0$ runs $\B$ and obtains $(D,L,L',\AUX)$.
For every $d \in [D]$, it runs the preprocessing algorithm of~\cref{thm:FIrecent} for the function $f_d$, 
and obtains the advice $\P_d$.
The advice string is $\P = (D,L,L',\AUX,\P_1,\ldots,\P_D)$.

\item {\bf Query phase:}
On input query $y \in [N']$ and advice string $\P = (D,L,L',\AUX,\P_1,\ldots,\P_D)$,
$\A_1$ first computes $d = \MAP_1(y)$ and $y' = \MAP_2(y)$. 
It then runs the online algorithm of~\cref{thm:FIrecent} for $f_{d}$
with advice string $\P_d$ and query $y'$.
If this algorithm outputs $\bot$, then $\A_1$ outputs $\bot$.
Otherwise, denote the output of that algorithm by $x' \in [L]$.
$\A_1$ then computes  
$x = \TR(y, x')$. If $f(x) = y$, the algorithm outputs $x$, and it outputs $\bot$ otherwise.

Observe that by the definition of $\A$ and by~\cref{thm:FIrecent}, 
for every $0\leq \delta \leq 1$, we have $S=\widetilde{O}( \hat{L}^{1.5-\delta} \cdot \hat{D} + \hat{S})$ 
and $T=\widetilde{O}(\hat{L}^{\delta})$ (since $f_d,\MAP_1,\MAP_2,\TR$ all run in time $\widetilde{O}(1)$).

\item {\bf Analysis:}
Fix $y \in [N']$ for which $f^{-1}(y)$ is non-empty.
$\A_1$ succeeds if the following three events occur simultaneously:
(1) $f_{\MAP_1(y)}^{-1}(\MAP_2(y))$ is non-empty, 
(2) the online algorithm of~\cref{thm:FIrecent} returns 
$x' \in [L]$ such that $f_{\MAP_1(y)}(x') = \MAP_2(y)$, 
and (3) $\TR(y, x' ) \in f^{-1}(y)$.

By~\cref{constraint:1}
the first event occurs with probability at least $5/6$, in which case the second event occurs with probability $1 - 1/L \geq 5/6$
(we may assume that $L \geq 6$, as otherwise, inverting $f_{\MAP_1(y)}$ in constant time is trivial). By~\cref{constraint:2},
the third event occurs with probability $5/6$.

Overall, by a union bound, the success probability of $\A_1$ is at least $1 - 3/6 = 1/2$, hence 
$\A$ is a weak $(S,T)$-algorithm as claimed.

Finally, we deal with the shared randomness of~\cref{thm:FIrecent}. 
For each sub-function $f_d\colon[L]\to[L']$, the length of the shared string is $\widetilde{O}(\hat{L}^{\delta})$. By amplification (\cref{lem:fworstCaseToAvCase}), a shared string of length $\widetilde{O}(\hat{L}^{\delta})$ can guarantee success probability of $1-\widetilde{O}(1/(N\hat{D}))$. Therefore, by a union bound, the same shared random string can be used for all $\widetilde{O}(\hat{D})$ sub-functions $f_d$, and thus, except with probability at most $1/(2N)$, succeed for all of them simultaneously. We now modify the preprocessing algorithm $\A_0$ to send this random string as part of the preprocessing advice. Since this random string is of length $\widetilde{O}(\hat{L}^{\delta})$ bits, we conclude that $S=\widetilde{O}( \hat{L}^{1.5-\delta} \cdot \hat{D} + \hat{S} + \hat{L}^{\delta})$ as claimed.\qedhere
 \end{description}
\end{proof}

\section{Algorithm for \ThreeSUMInd}\label{sec:3sum}
In this section, we present an algorithm for $\ThreeSUMInd(n,m)$ in \cref{thm:threeTSum}. This algorithm will be used to prove our main results for \ThreeSUMInd and its generalization, \kSUMInd.
\begin{theorem}
\label{thm:threeTSum}
For every $0\leq \delta\leq 1$, there is an $(S, T)$-algorithm for $\ThreeSUMInd(n,m)$ with space 
$S=\widetilde{O}(n^{1.5-\delta} \cdot m)$ and query time 
$T=\widetilde{O}(n^{\delta})$.
\end{theorem}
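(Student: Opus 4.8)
The plan is to apply Theorem~\ref{thm:sub-functions} to the natural function underlying $\ThreeSUMInd(n,m)$, choosing the sub-function decomposition via random primes so that the additive structure makes everything efficiently computable. By Remark~\ref{rem:ksumdef}(5) (applied to the two-list variant) I may assume $a_i, b_j \in [M]$ with $M = \widetilde{O}(nm)$, so define $f\colon [n]\times[m] \to [M]$ by $f(i,j) = a_i + b_j$; inverting $f$ at $y$ is exactly answering the query $y$ for $\ThreeSUMInd(n,m)$. Here $N = nm$ and $N' = \widetilde{O}(nm)$. The algorithm $\B$ will draw two random primes $p = \widetilde{\Theta}(n)$ and $q = \widetilde{\Theta}(m)$ (say from suitable dyadic ranges via Theorem~\ref{thm:primeNumber}); it sets $D = q$, $L = n$, $L' = p$, and lets $\AUX$ consist of $p$, $q$, together with a sorted array of the pairs $(b_j \bmod q, j)$ for $j \in [m]$ (length $\widetilde{O}(m)$ bits), plus a fresh random ``default'' value in $[p]$ for each residue $d \in [q]$ used to handle empty fibers. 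So $\hat D = m$, $\hat L = n$, $\hat S = \widetilde O(m)$.

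Next I define the maps. For $d \in [q]$, set $f_d(i)$ to be $(a_i + b_{j(i,d)}) \bmod p$ where $j(i,d)$ is the \emph{first} index $j\in[m]$ with $b_j \equiv d - a_i \pmod q$ — found by binary search in the sorted array in $\AUX$ — and if no such $j$ exists, $f_d(i)$ is the stored random default for $d$. Set $\MAP_1(y) = y \bmod q$, $\MAP_2(y) = y \bmod p$, and $\TR(y, i) = (i, j(i, y\bmod q))$. Constraint~\ref{constraint:time} is immediate since each evaluation is a binary search plus $O(1)$ arithmetic. For Constraint~\ref{constraint:1}: if $y = a_i + b_j$ for some $(i,j)$, then with $d = y \bmod q$ we have $b_j \equiv d - a_i \pmod q$, so $j(i,d)$ is well-defined; it remains to argue that $f_d(i) = y \bmod p$, i.e.\ that $a_i + b_{j(i,d)} = a_i + b_j$ (not merely congruent mod $p$). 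This holds provided the two distinct values $b_j$ and $b_{j(i,d)}$ (if distinct) do not collide modulo $q$ — but they might. The fix is the standard one: over a random prime $q$ in the dyadic range around, say, $m \cdot \poly\log$, the probability that a \emph{fixed} query $y$ suffers such a collision is $\widetilde O(1/q) = \widetilde o(1)$ because each of the $\le m^2$ pairs of distinct $b$-values collides mod $q$ only for $\widetilde O(1)$ primes out of $\widetilde\Omega(m)$ candidates; taking $q$ from a large enough range drives the per-query failure probability below $1/6$. The same randomness argument, now over both $p$ and $q$, handles Constraint~\ref{constraint:2}: if $f_{d}(i) = y \bmod p$ with $d = y\bmod q$ and the fiber is non-empty, then $j(i,d)$ is a genuine index, $a_i + b_{j(i,d)} \equiv y \pmod{pq}$, and since $0 \le a_i + b_{j(i,d)} < 2M \ll pq \cdot \poly\log$... — here I need $pq$ to dominate $2M$, so I actually take $p,q$ from ranges whose product exceeds $2M = \widetilde O(nm)$ with room to spare, e.g.\ $p = \widetilde\Theta(n)$, $q = \widetilde\Theta(m)$ each inflated by a $\poly\log$ factor, so that $a_i + b_{j(i,d)} = y$ exactly, giving $\TR(y,i) \in f^{-1}(y)$. (The default-value case contributes no false fiber element: a random default in $[p]$ equals $y\bmod p$ with probability $1/p = \widetilde o(1)$, foldable into the $1/6$ slack, and even then $\TR$'s output is verified by $\A_1$ in Theorem~\ref{thm:sub-functions} so it is never returned wrongly.)

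Finally I invoke Theorem~\ref{thm:sub-functions} with $\hat L = n$, $\hat D = m$, $\hat S = \widetilde O(m)$. Since $n \le m$ we have $\hat L = \widetilde O(N/\hat D)$ is \emph{not} quite what I want to plug in — instead I use the first (general) bound of Theorem~\ref{thm:sub-functions} directly: $S = \widetilde O(\hat L^{1.5-\delta}\hat D + \hat S + \hat L^{\delta}) = \widetilde O(n^{1.5-\delta} m + m + n^{\delta}) = \widetilde O(n^{1.5-\delta} m)$ (using $n \le m$ and $\delta \le 1$ so $n^\delta \le n \le m$ and $n^{1.5-\delta}m \ge m$), and $T = \widetilde O(\hat L^\delta) = \widetilde O(n^\delta)$, exactly the claimed bounds. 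The main obstacle is the collision analysis for the random prime $q$: I must be careful that the bad event ``some $b$-value collides mod $q$ with the specific $b_{j(i,d)}$ relevant to the query'' is bounded per query (Theorem~\ref{thm:sub-functions} only needs a $5/6$-probability guarantee \emph{for each fixed $y$}, not a union bound over all $y$, which is crucial since a union bound over all $\widetilde O(nm)$ queries would fail), and that I simultaneously arrange $pq > 2\max_i a_i + 2\max_j b_j$ so that congruence mod $pq$ upgrades to equality. Both are handled by taking $p,q$ from ranges of size $\widetilde\Theta(n)$, $\widetilde\Theta(m)$ respectively with a large enough $\poly\log$ cushion, and the rest is routine.
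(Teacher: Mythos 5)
Your proposal is essentially the same construction as the paper's: the sub-function decomposition via random primes $p = \widetilde\Theta(n)$ and $q = \widetilde\Theta(m)$, with $f_d$ indexed by residue mod $q$, range mod $p$, the sorted-by-residue array in $\AUX$ for binary search, and a per-query (not union-over-queries) failure-probability argument. The one non-cosmetic difference is how you ensure that a mod-$pq$ congruence upgrades to equality in $A+B$: you first reduce $M$ to $\widetilde{O}(nm)$ via the hashing of Remark~\ref{rem:ksumdef}(5) and then pick $p,q$ with $pq > 2M$, so uniqueness mod $pq$ is deterministic; the paper instead works with arbitrary $M$, takes $p,q$ from intervals $I_1,I_2$ sized to have $\widetilde\Omega(n)$ (resp.\ $\widetilde\Omega(m)$) primes, and proves probabilistically that with probability $\ge 35/36$ the residue class of $y$ mod $pq$ contains a unique element of $A+B$. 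Both routes work, and yours simply moves the ``random prime kills bad differences'' argument one level up into a preliminary input reduction. Two small slips worth fixing: you write that $b_j$ and $b_{j(i,d)}$ ``do not collide modulo $q$,'' but by construction they \emph{always} agree mod $q$; the event you actually need is that no $b_{j'}$ with $b_{j'}\ne b_j$ satisfies $b_{j'}\equiv b_j \pmod q$ (the paper's Eq.~(3)), a union bound over the $m$ indices $j'$, not over all $m^2$ pairs. Also, $\hat L = n = \widetilde{O}(N/\hat D)$ does hold here, so the second (specialized) form of Theorem~\ref{thm:sub-functions} would have been directly applicable; using the first form and substituting by hand gives the same answer, so this is harmless.
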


\begin{proof}
    Let $A=(a_0,\ldots,a_{n-1}),B = (b_0,\ldots,b_{m-1})$ be the input of the algorithm, and let $M=\max(\max_i a_i, \,\max_j b_j)$ be the maximum element in the two arrays. We will assume that $M\geq 8$ as otherwise the problem can be trivially solved in constant time and space $S=T=O(1)$ by storing an answer for each query $y \in [2M]$.

    Let $k_1=50n\ln(2M)\ln\ln(2M)=\widetilde{O}(n)$, and $k_2=50m\ln(2M)\ln\ln(2M)=\widetilde{O}(m)$. 
    Let $I_1=\{k_1,\,\ldots, \,2k_1\}$, and $I_2=\{k_2,\,\ldots, \,2k_2\}$. First, we show that the number of primes in $I_1$, denoted by $\pi(I_1)$, is at least $6n\log_{n}(2M)$. A similar argument shows that $\pi(I_2)\geq 6m\log_{m}(2M)$. By the prime number theorem (\cref{thm:primeNumber}) applied with $\eps =1/8$, for all large enough~$n$, we have that

    \begin{align*}
    \pi(I_1) 
    \geq \frac{(1-\eps)2k_1}{\ln(2k_1)}-\frac{(1+\eps)k_1}
    {\ln(k_1)}
    =\frac{k_1}{\ln(k_1)}\cdot\left(1-3\eps-\frac{2(1-\eps)\ln(2)}{\ln(2k_1)}\right)
    \geq\frac{k_1}{2\ln(k_1)}
    \geq6n\log_{n}(2M)\;.
    \end{align*}
    Here the penultimate inequality uses $k_1\geq 2^{13}$ (which holds for all large enough $n$), and the last inequality uses $\ln\ln(2M)\geq 1$ and $n\geq 50$.
    
    We now prove the result using~\cref{thm:sub-functions}. 
    First, let $N = n \cdot m$, $N' = 2M$, and let us define $f:[N] \to [N']$ as
    $f(i,j) = a_i + b_j$.
    Let $\hat{D} = m,\; \hat{L} = n,\; \hat{S} = m$. 
      
    Since $\hat{L} = n = \widetilde{O}((n \cdot m)/\hat{D})$, assuming~\cref{thm:sub-functions} applies with these parameters, we have that for every $0\leq \delta \leq 1$, 
    there is an $(S, T)$-algorithm for $\ThreeSUMInd(n,m)$ with space
    \begin{align*}
    S&=\widetilde{O}(N^{1.5-\delta} \cdot \hat{D}^{\delta-0.5} + \hat{S} + N^\delta/\hat{D}^\delta)\\
    &=\widetilde{O}((n \cdot m)^{1.5-\delta} \cdot (m)^{\delta-0.5} + m +n^\delta)\\
    &=\widetilde{O}(n ^{1.5-\delta} \cdot m + m+n^\delta)\\
    &=\widetilde{O}(n ^{1.5-\delta} \cdot m)
    \end{align*}
    (recalling that $m \geq n$) and query time
    \[T=\widetilde{O}(N^\delta/\hat{D}^\delta)
    =\widetilde{O}(n^\delta)\;,\]
    proving the theorem.
   
    We now define the algorithms and functions required to apply~\cref{thm:sub-functions}. 
    \begin{description}
\item {\bf Algorithm $\B$:} 
    $\B$ works as follows: 
    It draws $p$ from $I_1$ and $q$ from $I_2$ as independent and uniformly random primes,
    and sets $L = n,L' = p$ and $D = q$. 
   $\AUX$ contains:
    \begin{enumerate}
      \item The $n$ input numbers~$A=(a_0,\ldots,a_{n-1})$ sorted in the non-decreasing order (and in the increasing order of~$i$ in case of ties);
      \item The $m$ input numbers~$B=(b_0,\ldots,b_{m-1})$ sorted in the non-decreasing order (and in the increasing order of~$j$ in case of ties); 
      \item The $m$ input numbers~$B=(b_0,\ldots,b_{m-1})$ sorted in the non-decreasing order of their remainders modulo~$q$ (and in the increasing order of~$j$ in case of ties);          
      \item A uniformly chosen integer $z\in[p]$. 
    \end{enumerate}   

    \item {\bf Functions $f_d\colon[n]\to[p]$ for $d\in[q]$:} 
    For every $d\in[q]$, define the function $f_d\colon[n]\to[p]$ as follows. For every $i\in[n]$,
    \[
        f_d(i) = \begin{cases}
			a_{i}+b_j \bmod p, & \text{for minimum~$j$ such that $a_{i}+b_j \equiv d \bmod q$ if such $j\in[m]$ exists;}\\
            z, & \text{if $a_{i}+b_j\not\equiv d \bmod q$ for all $j\in[m]$.}
		 \end{cases}
    \]
    Observe that given the list of input numbers $B$ sorted in the non-decreasing order by their remainders modulo~$q$ (in $\AUX$), the function $f_d(i)$ is indeed computable in time $\widetilde{O}(1)$ by binary search 
    for $d - a_{i} \bmod q$, as required by~\cref{constraint:time} of~\cref{thm:sub-functions}.
    
    \item {\bf Functions $\MAP_1:[2M] \to [q]$, $\MAP_2:[2M] \to [p]$:}
    define $\MAP_1(y) = y \bmod q$ and  $\MAP_2(y) = y \bmod p$. Clearly, these functions are computable in time $\widetilde{O}(1)$, as required by~\cref{constraint:time} of~\cref{thm:sub-functions}.
    
    \item {\bf Function $\TR:[2M] \times [n] \to [n] \times [m]$:}
    Given $y \in [2M]$ and $i \in [n]$, $\TR(y, i)$
    uses binary search to find a $b_j$ such that $a_{i}+b_j=y$, and returns $(i ,j)$
    (if no such $j$ is found, it returns an arbitrary pair, e.g., $(1,1)$).
    Clearly, $\TR$ is computable in time $\widetilde{O}(1)$ given the sorted list of numbers, as required by~\cref{constraint:time} of~\cref{thm:sub-functions}.
    
    \item {\bf Analysis:} It remains to prove that the defined functions satisfy the constraints of~\cref{constraint:1} and~\cref{constraint:2} of~\cref{thm:sub-functions}.

    We first prove the constraint of~\cref{constraint:1}.
    Fix $y \in [2M]$ such that $y \in A + B$, namely, there exist $(i,j) \in [n] \times [m]$ such that 
    $a_i + b_j = y$. Fix such $(i,j)$.

    We claim that 
    \begin{align} \label{eq:0}
     \Pr[\exists j' \in [m] \colon  b_{j'} \neq b_{j} \wedge b_j \equiv b_{j'} \bmod q] \leq 1/6.  
    \end{align}
    This implies that $\Pr[f_{y \bmod q}(i) = (a_i + b_j \bmod p) =  (y \bmod p)] \geq 5/6$,
    hence the constraint of~\cref{constraint:1} holds.

    To prove~\cref{eq:0}, observe that $b_{j'}-b_j \in \{-2M,\ldots,2M\} \backslash \{0\}$, hence $b_{j'}-b_j$ has at most $\log_{m}(2M)$ distinct prime factors from $I_2$. For a fixed $b_{j'}$, the probability (over the choice of~$q$) that $b_{j'}-b_j \equiv 0 \bmod q$ is then at most $\log_{m}(2M)/\pi(I_2)\leq \frac{1}{6{m}}$. Taking a union bound over all $j' \in [m]$, such that $b_{j'} \neq b_j$ we have that $b_j \not\equiv b_{j'} \bmod q$ for all $j'$ such that $b_{j'} \neq b_j$ with probability at least $5/6$.

    Next, we prove the constraint of~\cref{constraint:2}. Again, fix $y \in A + B$.
    We claim that 
    \begin{align}
    \Pr[y \not\equiv z \bmod p] &\, = 1- 1/p \geq\, 1-1/{n} \;, \label{eq:one} \\
    \Pr[| \{ y'\in A+B, y \equiv y' \bmod pq \} | = 1] &\,\geq\, 35/36. \label{eq:two}
    \end{align}

    Given that $y \not\equiv z \bmod p$, if $f_{y \bmod q}(i') = y \bmod p$ for some $i' \in [n]$,
    then there exists $j' \in [m]$ such that $a_{i' } + b_{j'} \equiv y \bmod pq$.
    Furthermore, from 
    $| \{ y'\in A+B, y \equiv y' \bmod pq \} | = 1$, we have that 
    $a_{i'} + b_{j'} = y$,
    implying that $\TR(y, i' ) = (i',j')$.
    
    Therefore, these two inequalities imply that the constraint of~\cref{constraint:2} holds with probability at least $35/36 - 1/{n} \geq 5/6$ (for sufficiently large $n$), as required. 

    It remains to prove the two inequalities.
    The proof of~\cref{eq:one} is trivial by the uniform choice of $z \in [p]$. 
    
    The proof of~\cref{eq:two} is similar to the proof of~\cref{eq:0}: let $y' \in A + B$ such that $y' \neq y$.
    Since $y'-y \in \{-2M,\ldots,2M\} \backslash \{0\}$, we have that $y'-y$ has at most $\log_{n}(2M)$ distinct prime factors from $I_1$ and at most $\log_{m}(2M)$ distinct prime factors from $I_2$. For a fixed $y'$, the probability (over the choice of $p$ and $q$) that $y'-y \equiv 0\bmod pq$ is then at most $(\log_{n}(2M)/\pi(I_1)) \cdot (\log_{m}(2M)/\pi(I_2)) \leq \frac{1}{36 (n \cdot m)}$. Taking a union bound over all $y'\in A+B$, such that $y' \neq y$ we have that $y \not\equiv y' \bmod pq$ for all $y' \neq y$ with probability at least $35/36$. \qedhere
    \end{description}
\end{proof}

We are now ready to conclude \cref{thm:threeSum} from \cref{thm:threeTSum}.
\ThreeSumThm*
\begin{proof}
By setting $m = n$ in~\cref{thm:threeTSum}, we obtain an $(S, T)$-algorithm for \ThreeSUMInd with space 
$S=\widetilde{O}(n^{2.5-\delta})$ and query time $T=\widetilde{O}(n^{\delta})$.
\end{proof}

\section{Generalizations and Applications}
We begin this section with a generalization of the algorithm for \ThreeSUMInd to \kSUMInd.
\KSumThm*
\begin{proof}
Let $A=\{a_0,\ldots,a_{n-1}\}$ be an instance of \kSUMInd.
We define a $\ThreeSUMInd(n,m)$ instance with input $(A,B)$ by setting $B = (k-2)A$. Clearly, each query $b$ to \kSUMInd on input~$A$, is equivalent to the same query to $\ThreeSUMInd(n,m)$ on input~$(A, B)$. The constructed instance of $\ThreeSUMInd(n,m)$ has $m = O(n^{k-2})$.
Applying~\cref{thm:threeTSum} to this instance, we deduce that  
for every $0\leq \delta\leq 1$, 
there is an $(S, T)$-algorithm for \kSUMInd with space 
\[S=\widetilde{O}(n^{1.5-\delta} \cdot m) = 
\widetilde{O}(n^{k-0.5-\delta})
\]
and query time 
\[T=\widetilde{O}(n^{\delta})\;,\]
which completes the proof.
\qedhere
\end{proof}
In the next theorem, we show that one can modify the proofs of \cref{thm:threeTSum,thm:kSum} to obtain an analogous result for the \kXORInd problem.
\kXORThm*
\begin{proof}[Proof sketch]
The main difference between the proofs of this theorem and \cref{thm:kSum} is in the way we define the functions $\MAP_1, \MAP_2, \TR$, and $f_d$ when applying \cref{thm:sub-functions}. 
Let $A=(a_0,\ldots,a_{n-1})$ be the input of the algorithm, where each $a_i\in\F_2^\ell$ is viewed as a vector from $\F_2^\ell$. Let $B=(b_0,\ldots,b_{m-1})=(k-2)A$, where the addition, again, is over $\F_2^\ell$, and $m=O(n^{k-2})$.  We will assume that $\ell\geq1.5\log n$ as otherwise the problem can be trivially solved in constant time and space $S=\widetilde{O}(n^{1.5})=\widetilde{O}(n^{k-0.5-\delta})$ by storing an answer for each query $y \in \F_2^\ell$.

The algorithm $\B$  samples independent uniformly random full-rank matrices $P\in\F_2^{p\times \ell}$ and $Q\in\F_2^{q \times \ell}$ for $p=\log{n}+O(1)$ and $q=\log{m}+O(1)$. Then $\MAP_1\colon \F_2^\ell\to\F_2^q$ and $\MAP_2\colon \F_2^\ell\to\F_2^p$ are defined as
\begin{align*}
\MAP_1(y) &= Q y\;,\\
\MAP_2(y) &= P y\;.
\end{align*}
Let $z\in\F_2^p$ be a uniformly random vector. For every $d\in\F_2^q$, we define the function $f_d\colon[n]\to\F_2^p$ as follows. For every $i\in[n]$,
    \[
        f_d(i) = \begin{cases}
			P(a_{i} \oplus b_j), & \text{for minimum~$j$ such that $Q(a_{i} \oplus b_j) = d $ if such $j\in[m]$ exists;}\\
            z, & \text{if $Q(a_{i} \oplus b_j)\neq d$ for all $j\in[m]$.}
		 \end{cases}
    \]
Note that each $f_d$ can be efficiently computed given a sorted list of $Qb_j$ for all $j\in[m]$.
Finally,  Given $y \in \F_2^\ell$ and $i \in [n]$, $\TR(y, i)$
    uses binary search to find a $b_j$ such that $a_{i} \oplus b_j=y$, and returns $(i ,j)$
    (if no such $j$ is found, it returns an arbitrary pair, e.g., $(1,1)$).
    $\TR$ is computable in time $\widetilde{O}(1)$ given a sorted list of $b_j$.
We now apply \cref{thm:sub-functions} with the only difference that the ranges and domains of the functions are not subsets of integers but vector spaces over $\F_2$. It is now not hard to see that analogues of \cref{eq:0,eq:one,eq:two} hold for the functions defined above.
\end{proof}

Finally, we present an immediate application of the \ThreeSUMInd algorithm from \cref{thm:threeSum} to the \JI problem.

\JICor*
\begin{proof}
    This result immediately follows from the reduction from \JI over constant-size alphabets to \ThreeSUMInd of \cite{BGLPRS24} (\cref{thm:JI}) and the \ThreeSUMInd algorithm from \cref{thm:threeSum}.
\end{proof}

\section*{Acknowledgments}
We thank the anonymous referees for their very helpful comments.
\newpage
\bibliographystyle{alpha}
\bibliography{refs.bib}

@article{H80,
  title={A cryptanalytic time-memory trade-off},
  author={Hellman, Martin},
  journal={IEEE Transactions on Information Theory},
  volume={26},
  number={4},
  pages={401--406},
  year={1980},
  publisher={IEEE}
}

@inproceedings{FN91,
  title={Rigorous time/space tradeoffs for inverting functions},
  author={Fiat, Amos and Naor, Moni},
  booktitle={STOC},
  year={1991},
}

@inproceedings{CK19,
  title={The function-inversion problem: Barriers and opportunities},
  author={{Corrigan-Gibbs}, Henry and Kogan, Dmitry},
  booktitle={TCC},
  year={2019},
}

@inproceedings{GGHPV20,
  title = {Data Structures Meet Cryptography: {{3SUM}} with Preprocessing},
  booktitle = {{STOC}},
  author = {Golovnev, Alexander and Guo, Siyao and Horel, Thibaut and Park, Sunoo and Vaikuntanathan, Vinod},
  year = {2020},
}

@inproceedings{Y90,
  title={Coherent functions and program checkers},
  author={Andrew Chi-Chih Yao},
  booktitle={STOC},
  year={1990}
}

@inproceedings{DKKS21,
  title = {Data Structures Lower Bounds and Popular Conjectures},
  booktitle = {{ESA}},
  author = {Dvo{\v r}{\'a}k, Pavel and Kouck{\'y}, Michal and Kr{\'a}l, Karel and Sl{\'i}vov{\'a}, Veronika},
  year = {2021},
}

@inproceedings{GGPS23,
  title={Revisiting time-space tradeoffs for function inversion},
  author={Golovnev, Alexander and Guo, Siyao and Peters, Spencer and {Stephens-Davidowitz}, Noah},
  booktitle={CRYPTO},
  year={2023},
}

@inproceedings{GKLP17,
  title={Conditional lower bounds for space/time tradeoffs},
  author={Goldstein, Isaac and Kopelowitz, Tsvi and Lewenstein, Moshe and Porat, Ely},
  booktitle={WADS},
  year={2017},
}

@inproceedings{CL15,
  title={Clustered integer {3SUM} via additive combinatorics},
  author={Chan, Timothy M. and Lewenstein, Moshe},
  booktitle={STOC},
  year={2015}
}

@inproceedings{M24,
  title={Improved Space-Efficient Approximate Nearest Neighbor Search Using Function Inversion},
  author={McCauley, Samuel},
  booktitle={ESA},
  year={2024},
}

@article{AESZ23,
  title={Time and space efficient collinearity indexing},
  author={Aronov, Boris and Ezra, Esther and Sharir, Micha and Zigdon, Guy},
  journal={Computational Geometry},
  volume={110},
  pages={101963},
  year={2023},
  publisher={Elsevier}
}

@inproceedings{BGLPRS24,
  title={Gapped String Indexing in Subquadratic Space and Sublinear Query Time},
  author={Bille, Philip and G{\o}rtz, Inge Li and Lewenstein, Moshe and Pissis, Solon P. and Rotenberg, Eva and Steiner, Teresa Anna},
  booktitle={STACS},
  year={2024},
}

@inproceedings{ACDI24,
  title={A general technique for searching in implicit sets via function inversion},
  author={Aronov, Boris and Cardinal, Jean and Dallant, Justin and Iacono, John},
  booktitle={SOSA},
  year={2024},
}

@article{KP19,
  title={The strong {3SUM-INDEXING} conjecture is false},
  author={Kopelowitz, Tsvi and Porat, Ely},
  journal={arXiv:1907.11206},
  year={2019}
}

@inproceedings{BS00,
  title={Cryptanalytic time/memory/data tradeoffs for stream ciphers},
  author={Biryukov, Alex and Shamir, Adi},
  booktitle={ASIACRYPT},
  year={2000},
}

@inproceedings{BSW01,
  title={Real time cryptanalysis of {A5/1} on a {PC}},
  author={Biryukov, Alex and Shamir, Adi and Wagner, David},
  booktitle={FSE},
  year={2001},
}

@inproceedings{O03,
  title={Making a faster cryptanalytic time-memory trade-off},
  author={Oechslin, Philippe},
  booktitle={CRYPTO},
  year={2003},
}

@inproceedings{NS05,
  title={Fast dictionary attacks on passwords using time-space tradeoff},
  author={Narayanan, Arvind and Shmatikov, Vitaly},
  booktitle={CCS},
  year={2005}
}

@inproceedings{MP24,
  title={The Non-Uniform Perebor Conjecture for Time-Bounded {Kolmogorov} Complexity Is False},
  author={Mazor, Noam and Pass, Rafael},
  booktitle={ITCS},
  year={2024},
}

@inproceedings{HIW24,
  title={Beating Brute Force for Compression Problems},
  author={Shuichi Hirahara and Rahul Ilango and Ryan Williams},
  booktitle={STOC},
  year={2024},
}

@misc{DV01,
author={Erik D. Demaine and Salil P. Vadhan},
title={Some notes on {3SUM}},
year={2001},
month={December},
note={Unpublished manuscript}
}

@article{GO95,
  title={On a class of {$O(n^2)$} problems in computational geometry},
  author={Gajentaan, Anka and Overmars, Mark H.},
  journal={Computational geometry},
  volume={5},
  number={3},
  pages={165--185},
  year={1995},
  publisher={Elsevier}
}

@inproceedings{V18,
  title={On some fine-grained questions in algorithms and complexity},
  author={Virginia {Vassilevska Williams}},
  booktitle={ICM 2018},
  year={2018}
}

@misc{VW24,
  title={Fixed-Parameter and Fine-Grained Complexity},
  author={Virginia {Vassilevska Williams} and Ryan Williams},
  note={Lecture Notes},
  year={2024}
}

@inproceedings{AHY25,
  title={Fine-grained complexity in a world without cryptography},
  author={Alman, Josh and Huang, Yizhi and Yeo, Kevin},
  booktitle={EUROCRYPT},
  year={2025},
}

@article{E49,
  title={On a new method in elementary number theory which leads to an elementary proof of the prime number theorem},
  author={Erd\H{o}s, Paul},
  journal={Proceedings of the National Academy of Sciences},
  volume={35},
  number={7},
  pages={374--384},
  year={1949}
}

@inproceedings{CFL09,
  title={Searching for jumbled patterns in strings},
  author={Cicalese, Ferdinando and Fici, Gabriele and Lipt{\'a}k, Zsuzsanna},
  booktitle={Prague Stringology Conference},
  year={2009},
}

@inproceedings{KRR13,
  title={Efficient indexes for jumbled pattern matching with constant-sized alphabet},
  author={Kociumaka, Tomasz and Radoszewski, Jakub and Rytter, Wojciech},
  booktitle={ESA},
  year={2013},
}

@inproceedings{CL23,
  author       = {Eldon Chung and
                  Kasper Green Larsen},
  title        = {Stronger {3SUM-Indexing} Lower Bounds},
  booktitle    = {SODA},
  year         = {2023},
}

\end{document}